\documentclass[11pt]{article}

\usepackage{amsmath, amsthm, amssymb}
\usepackage{inputenc, graphicx, color, algorithm, algorithmic}
\usepackage{titling}
\thanksmarkseries{arabic}

\textheight 230mm
\topmargin-1cm
\textwidth 160mm
\oddsidemargin0.2cm \evensidemargin0cm
\parindent7mm
\newtheorem{thm}{Theorem}[section]
\newtheorem{cor}[thm]{Corollary}
\newtheorem{prop}[thm]{Proposition}
\newtheorem{lem}[thm]{Lemma}

\theoremstyle{definition}
\newtheorem{defi}[thm]{Definition}
\newtheorem{rem}[thm]{Remark}
\newtheorem{ex}[thm]{Example}
\numberwithin{equation}{section}
\numberwithin{algorithm}{section}
\newcommand{\rowsp}{\mathrm{rowsp}}
\newcommand{\rk}{\mathrm{rk}}

\newcommand{\F}{\mathbb{F}}

\newcommand{\N}{\mathbb{N}}
\newcommand{\G}{\mathcal{G}_q(k,n)}
\newcommand{\U}{\mathcal{U}}
\newcommand{\cM}{\mathcal{M}}

\newcommand{\cG}{\mathcal{G}}
\newcommand{\cH}{\mathcal{H}}
\newcommand{\cS}{\mathcal{S}}

\newcommand{\Vvs}{\mathcal{V}}
\newcommand{\Uvs}{\mathcal{U}}
\newcommand{\Rvs}{\mathcal{R}}
\newcommand{\Evs}{\mathcal{E}}
\newcommand{\eavg}{\mbox{$e_{\rm{avg}}$}}
\newcommand{\C}{\mathcal{C}}

\newcommand{\GL}{\mathrm{GL}}

\newcommand{\Rs}{\mbox{$R_{\mathcal S}$}}
\newcommand{\Rh}{\mbox{$R_{\mathcal H}$}}
\newcommand{\eh}{\mbox{$e_{\mathcal H}$}}
\newcommand{\Gaussian}[2]{\mbox{$\Big[\mbox{${#1}\atop{#2}$}\Big]$}}
\newcommand{\rs}{\mathrm{rowsp}}

\title{Symbol Erasure Correction Capability\\ of Spread Codes}

\author{Heide Gluesing-Luerssen\thanks{HGL is with the Department of Mathematics, University of Kentucky,
Lexington KY 40506-0027, USA; heide.gl@uky.edu. She was partially supported by the grant \#422479 from the Simons Foundation.}\ \ and Anna-Lena Horlemann-Trautmann\thanks{ALHT is with the Faculty of Mathematics and Statistics, University of St.\ Gallen, St.\ Gallen, Switzerland; anna-lena.horlemann@unisg.ch.}}

\date{}

\begin{document}

\maketitle

\abstract{
We consider data transmission over a network where each edge is an erasure channel and where the inner nodes transmit a random linear combination of their incoming information. We distinguish two channel models in this setting, the row and the column erasure channel model. For both models we derive the symbol erasure correction capabilities of spread codes and compare them to other known codes suitable for those models. Furthermore, we explain how to decode these codes in the two channel models and compare their decoding complexities. The results show that, depending on the application and the to-be-optimized aspect, any combination of codes and channel models can be the best choice.
}

\section{Introduction}

Network coding in general, and random (or non-coherent) network coding in particular, has received much attention in the last decade. Subspace codes, first introduced in \cite{ko08}, are a class of codes well suited for error correction in random network coding. By definition, they are sets of subspaces of some given vector space of dimension $n$ over the finite field $\F_q$. One of the most studied families of subspace codes are \emph{spread codes} (or simply \emph{spreads}), objects that had been studied in finite geometry for a long time, without the application to coding theory.

In the classical setup, as used in \cite{ko08}, one considers a network whose edges are $q$-ary symmetric channels, i.e., where symbols from $\F_q$ might be changed into other symbols of $\F_q$ during transmission. In this paper however, we focus on networks whose edges are erasure channels, i.e., where symbols are either unchanged or erased during transmission. This scenario has been studied significantly less than the classical setup, but some works exist, see e.g.\ \cite{sk13}. In \cite{sk13} the authors define \emph{hybrid codes} to correct both symbol erasures and classical errors. These codes are defined as a composition of Reed-Solomon and subspace codes.

In this work we investigate the performance of spread codes over an erasure-only network channel. More precisely, we compare the symbol erasure correction capability of spread codes in two different network channel models, the \emph{row erasure channel} and the \emph{column erasure channel}. Furthermore, we compare the results to the erasure correction capability of hybrid codes in the same scenario. As a next step, we give decoding algorithms for the various codes and channel models and derive their computational complexities. For most of the paper we assume that the network channel is deletion-free, i.e., that no rank deficiencies occur during transmission. However, we also consider the case with deletions in the end of the paper.

The paper is structured as follows. We start with preliminaries about finite fields, subspace codes and rank-metric codes in Section \ref{S-Prelim}. In Section \ref{S-Models} we explain the two network channel models we are going to investigate, namely the row erasure channel and the column erasure channel model. The first main results, the symbol erasure correction capabilities of spread codes in the two channel models, are derived in Section \ref{sec:spreads}. In Section \ref{S-Hybrid} we compare these numbers to the erasure correction capability of hybrid codes. In Section \ref{sec:decoding} we show how to decode spread codes in the two channel models. We derive the corresponding decoding complexities and compare them to the decoding performance of hybrid codes. Finally, in Section \ref{sec:simul}, we consider the column erasure channel with deletions. We again derive the symbol erasure correction capability and a decoding algorithm for spread codes and compare this performance to the one of hybrid codes. We conclude this work in Section \ref{sec:conclusion}.


\section{Preliminaries}\label{S-Prelim}

We first state some well-known preliminary results about finite fields. Most of the results, as well as their proofs and further information on finite fields, can be found e.g. in \cite{li94}.

Let $q$ be a prime power and $\F_q$ be the finite field with $q$ elements. The set of invertible elements is denoted by $ \F_q^*:=\F_q\backslash\{0\}$.  Let $p(x)=\sum_{i=0}^{k-1}p_i x^i+x^k \in \F_q[x]$ be a monic irreducible polynomial and let $\alpha \in \F_{q^k}$ be a root of it. Then
$$ \F_{q^k} \cong \F_q[\alpha] .$$
Throughout this paper, we realize the field~$\F_{q^k}$ as $\F_q[\alpha]$, if not noted differently.

For a monic polynomial $p(x) =\sum_{i=0}^{k-1}p_i x^i+x^k \in \F_q[x]$ of degree $k$ the matrix
\begin{equation}\label{e-P}
   P =\begin{pmatrix} 0&0&\cdots& 0&-p_0\\ 1&0&\cdots&0&-p_1\\0&1&\cdots&0&-p_2\\ & &\ddots  & &\vdots\\ 0&0&\cdots&1&-p_{k-1}\end{pmatrix}
\end{equation}
is the \emph{companion matrix} of $p(x)$.
If $p(x)$ is irreducible and $\alpha$ is a root of it, then $\F_q[P] \cong \F_q[\alpha]$, i.e., $\F_q[P]$ is a field of size $q^k$. Hence all nonzero elements of $\F_q[P]$ are invertible, i.e., they have rank $k$. Naturally, the same holds for the transposed matrices, i.e., for $\F_q[P^\top]$.

We have the natural field isomorphism
\begin{equation}\label{e-phi}
\phi:\F_{q^k} \cong \F_q[P],\quad \sum_{i=0}^{k-1} v_i\alpha^i\longmapsto \sum_{i=0}^{k-1}v_i P^i    ,
\end{equation}
and the vector space isomorphism
\begin{equation}\label{e-psi}
    \psi:\F_q^k\longrightarrow\F_{q^k},\quad (v_0,\ldots,v_{k-1})\longmapsto \sum_{i=0}^{k-1}v_i\alpha^i.
\end{equation}
We also extend $\psi$ to
\begin{equation}\label{e-barpsi}
 \bar\psi : \F_q^{\ell\times k}  \longrightarrow \F_{q^k}^\ell, \quad
 \begin{pmatrix}  a_{11} & \dots & a_{1k} \\
 \vdots && \vdots\\
 a_{\ell 1} & \dots & a_{\ell k}
                           \end{pmatrix}   \longmapsto   \Big(\psi(a_{11},\dots,a_{1k}), \dots, \psi(a_{\ell 1},\dots,a_{\ell k})\Big) ,
\end{equation}
which we may use for various values of~$\ell$.

Note that $\psi(\mathbf v P^\top)=\psi(\mathbf v)\alpha$ for all row vectors $\mathbf v\in\F_q^k$ and $\psi(P\mathbf u)=\alpha\psi(\mathbf u)$ for all column vectors $\mathbf u\in\F_q^k$.
As a consequence, for all $s\in\N_0$,
\begin{equation}\label{e-psiP}
\bar\psi\left((P^s)^\top\right)=(\alpha^s,\ldots,\alpha^{s+k-1})  .
\end{equation}

\medskip

The \emph{Grassmannian variety} $\G$ is the set of all $k$-dimensional subspaces of the $n$-dimensional vector space $\F_q^n$.
It is a metric space with respect to the subspace distance $d_S$, defined as
\[
    d_S(\Uvs, \Vvs) := \dim (\Uvs + \Vvs) - \dim (\Uvs \cap \Vvs) = 2k - 2\dim(\Uvs \cap \Vvs)
\]
for all $\Uvs, \Vvs \in \G$, see e.g.\ \cite{ko08}.

\begin{defi}
A \emph{constant dimension (subspace) code of dimension~$k$ and length~$n$} is a subset $\C\subseteq\G$. The minimum subspace distance of $\C$ is defined as
$$ d_S(\C) := \min \{d_S(\Uvs, \Vvs) \mid \Uvs,\, \Vvs \in \C,\, \Uvs \neq \Vvs\}  . $$
\end{defi}

We can represent a subspace $\Uvs \in \G$ by a basis matrix $U \in \F_q^{k\times n}$ in the sense that $\rs(U) = \Uvs$, where $\rs(U)$ denotes the row space of $U$.
This representation is not unique.
However one can determine a unique matrix representation for the elements of $\G$, e.g., by choosing the basis matrices in reduced row echelon form (RREF).

Subspace codes were originally introduced in \cite{ko08} for error-correction in random (or non-coherent) network coding. In \cite{ko08} the authors consider a single-source multicast network channel, where every edge of the network can be thought of as a $q$-ary symmetric channel and the inner nodes of the network send a random linear combination of their incoming information along the outgoing edges.
They model this as the \emph{operator channel}, which takes as input a $k$-dimensional vector space $\U \in \G$ and outputs a received word of the form
$$ \Rvs = \bar \U \oplus \Evs ,$$
where $\bar\U $ is a subspace of $\U$ and $\Evs$ is the error space such that $\Evs \cap \Uvs =\{\mathbf{0}\}$. In practice, the source sends a basis of $\U$ along its outgoing edges (one vector per edge) and the receiver gets a set of vectors generating $\Rvs$.

There are two types of errors that can be observed at the receiver: \emph{deletions}, which correspond to the dimension losses from $\U$ to $\bar \U$, and \emph{insertions}, which correspond to the dimension gains due to the error space $\Evs$. A constant dimension code with minimum subspace distance $\delta$ can correct up to $(\delta-1)/2$ errors (deletions $+$ insertions).
For more information on the operator channel the reader is referred to \cite{ko08}.

One of the most studied families of constant dimension codes are spread codes.
\begin{defi}
A \emph{spread (code)} in $\G$ is a subset of  $\G$ such that all elements intersect pairwise trivially and their union  covers the whole vector space $\F_q^n$.
\end{defi}

Spreads are well-known geometrical objects.
A simple counting argument shows that they exist if and only if $k\mid n$, in which case they have $(q^n-1)/(q^k-1)$ elements.
As a constant dimension code they have minimum subspace distance $2k$.

The following construction for spread codes (see e.g.\ \cite{ma08p}) will be considered later in this paper.

\begin{defi}\label{D-DesSpread}
Let  $P\in\GL_k(q)$ as in~\eqref{e-P} be the companion matrix of a monic irreducible polynomial in $\F_q[x]$ of degree $k$.
Fix $m\in\N$ and set $n=mk$.
Then the code
\[
  \mathcal S_q(m,k,P)=\{\text{rowsp}(G)\mid G\in \cM\}\subseteq\G,
\]
where
\begin{equation*}
 \cM=\Big\{(0_{k\times k}\mid\ldots\mid 0_{k\times k}\mid I_k\mid B_{i+1}\mid \ldots\mid B_m)\,\Big|\, i=1,\ldots,m,\,B_i\in\F_q[P]\Big\},
\end{equation*}
is a spread code.
We call any spread of this form a \emph{Desarguesian spread}.
Similarly, using~$P^\top$ instead of~$P$ leads to the spread code $\cS_q(m,k,P^\top)$.
\end{defi}

Note that each matrix in~$\cM$ is in reduced row echelon form.
The isomorphism~$\phi$ from \eqref{e-phi} implies that the Desarguesian spread $\mathcal S_q(m,k,P)$ is isomorphic to the Grassmannian
$\cG_{q^k} (1,m)$ via
\begin{align}\label{e-DesGrass}
   \cG_{q^k} (1,m)  &\longrightarrow\mathcal S_q(m,k,P) \nonumber\\
   \rs(u_1,\ldots,u_m)   &\longmapsto\rs\Big(\phi(u_1)\mid\ldots\mid\phi(u_m)\Big).
\end{align}
Analogously we get an isomorphism from $\cG_{q^k} (1,m)$ to $\mathcal S_q(m,k,P^\top)$.
This justifies the terminology \emph{Desarguesian} as the above defined spreads are isomorphic to $\cG_{q^k}(1,m)$, which in turn is a representation of a
Desarguesian $(k-1)$-spread as known in finite geometry. This fact is used in Section \ref{sec:decoding}.

\medskip

Another class of codes, related to subspace codes, are rank-metric codes.
They are defined as subsets of the matrix space $\F_q^{m\times n}$, which forms a metric space with the \emph{rank distance} $d_R$, defined as
$$ d_R(U,V) := \rk (U-V) $$
for all $U,V \in \F_q^{m\times n}$.

\begin{defi}
An $m\times  n$ \emph{rank-metric code} is a subset $C\subseteq \F_q^{m\times n}$.
The minimum rank distance of $C$ is defined as
$$ d_R(C) := \min \{d_R(U,V) \mid U,V \in C , U \neq V\}  . $$
\end{defi}

Rank-metric codes can be used for correcting various error and erasure types. In this paper we will focus on row and column erasures, which means that a complete row, respectively column,
is erased in a matrix in $\F_q^{m\times n}$.

The following result can also be found in a more general version in \cite{ri04p}. For completeness
 we also give a proof of the result.

\begin{lem}\label{L-RP}
Let $C\subseteq \F_q^{k\times n}$ be a linear
rank-metric code of minimum rank distance $k$.
Then any combination of~$r$ row erasures and~$c$ column erasures can be decoded as long as $r+c<k$.
\end{lem}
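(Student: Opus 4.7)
The plan is to show that the received (partially erased) matrix is consistent with at most one codeword, i.e., that knowing the entries in the unerased positions determines the codeword uniquely. By linearity, this is equivalent to showing that the zero matrix is the only codeword whose support is contained in the union of the $r$ erased rows and $c$ erased columns.

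Concretely, I would argue as follows. Suppose $U,V\in C$ both agree with the received matrix on the set of unerased entries. Then $W:=U-V$ is an element of $C$ all of whose nonzero entries lie in the union of the $r$ erased rows and the $c$ erased columns. I would then bound $\rk(W)$ by writing $W=A+B$, where $A$ contains only the entries of $W$ that lie in the $r$ erased rows (zeros elsewhere) and $B$ contains only the remaining nonzero entries of $W$, which by construction lie in the $c$ erased columns. Since $A$ has at most $r$ nonzero rows, $\rk(A)\le r$; since $B$ has at most $c$ nonzero columns, $\rk(B)\le c$. Hence $\rk(W)\le r+c<k$.

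Because $C$ has minimum rank distance $k$, any nonzero element of $C$ has rank at least $k$. Combined with $\rk(W)<k$ this forces $W=0$, i.e.\ $U=V$. Therefore the codeword consistent with the received erased matrix is unique, which is exactly the decodability statement.

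There is no real obstacle here; the only point requiring a moment of care is the rank estimate for a matrix whose support is contained in a prescribed set of rows and columns, and this is settled by the $A+B$ decomposition above. Note also that linearity of $C$ is used only to pass from ``two codewords agreeing off the erased positions'' to a single codeword $W$ supported on those positions; the bound $\rk(W)\le r+c$ itself is purely combinatorial.
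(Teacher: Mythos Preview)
Your proof is correct and follows essentially the same approach as the paper: both argue that if two codewords agree on the unerased entries, their difference has rank at most $r+c<k$ and hence must be zero. The only difference is cosmetic---the paper simply asserts that a matrix vanishing on a $(k-r)\times(n-c)$ submatrix has rank at most $r+c$, whereas your $A+B$ decomposition makes that step explicit.
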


\begin{proof}
Assume without loss of generality that the first~$r$ rows and the first~$c$ columns of the matrix $A\in C$ have been erased.
Then a $(k-r)\times(n-c)$-matrix~$\hat{A}$ is received.
Suppose there are two matrices $A_1,\,A_2\in C$ with the same submatrix~$\hat{A}$ in the lower right corner.
Then the difference $A_1-A_2$ is zero in that $(k-r)\times(n-c)$-submatrix and thus $d_R(A_1,A_2)\leq r+c$.
Since $r+c<k$, we conclude $A_1=A_2$.
\end{proof}

One of the most studied families of rank-metric codes are Gabidulin codes \cite{de78,ga85a}.

\begin{defi}
Let $k\geq\ell$ and $\beta_1,\dots,\beta_\ell \in \F_{q^k}$ be linearly independent over $\F_q$. Then the $\F_{q^k}$-linear subspace $C \subseteq \F_{q^k}^\ell$ with generator matrix
$$G= \left( \begin{array}{ccccc} \beta_1 & \dots & \beta_\ell  \\ \beta_1^{q} & \dots & \beta_\ell^q  \\  \vdots && \vdots \\ \beta_1^{q^{s-1}} & \dots & \beta_\ell^{q^{s-1}}    \end{array}\right)$$
is called a \emph{Gabidulin code} of length $\ell$ and $\F_{q^k}$-dimension $s$.
The matrix representation $\bar \psi^{-1} (C)$ is a linear rank-metric code of $\F_{q}$-dimension $ks$ in $\F_q^{\ell \times k}$. We will use the name \emph{Gabidulin code} for both representations.
\end{defi}

Gabidulin codes are optimal in the sense that their minimum rank distance $d_R$ achieves the Singleton bound $d_R = \ell-s+1$. For more information the interested reader is referred to \cite{ga85a}.

\begin{prop}\label{P-FPGab}
Let $p(x)=\sum_{i=0}^{k-1}p_ix^i + x^k \in \F_q[x]$ be irreducible, $\alpha$ a root of $p(x)$ and $P$ the companion matrix as in~\eqref{e-P}.
\begin{enumerate}
\item
$\bar\psi(\F_q[P^\top])$ is the Gabidulin code in $\F_{q^k}^k$ with generator matrix
$G= ( 1 \; \alpha \; \dots \; \alpha^{k-1})$.
It thus has length~$k$, $\F_{q^k}$-dimension~$1$, and minimum rank distance~$k$.
\item
For any $S\in \F_q^{(k-r)\times k}$ of full rank
 the code $\bar\psi(S\F_q[P^\top])$  is the Gabidulin code in $\F_{q^k}^{k-r}$ with generator matrix
$GS^\top$.
\end{enumerate}
\end{prop}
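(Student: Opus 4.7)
The plan is to verify both parts at once by showing that applying $\bar\psi$ turns powers of $P^\top$, suitably multiplied on the left by $S$, into scalar multiples of $GS^\top$. Part 1 will then fall out as the case $S = I_k$.

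First I would observe, using \eqref{e-psiP} together with the identity $\psi(\mathbf v P^\top) = \psi(\mathbf v)\alpha$ stated just before it, that for any row vector $\mathbf v \in \F_q^k$ and any $s \in \N_0$ one has $\psi(\mathbf v (P^\top)^s) = \psi(\mathbf v)\alpha^s$. Next I would identify $\bar\psi(S)$ with $GS^\top$: writing the rows of $S$ as $s_1, \dots, s_{k-r}$, the $i$-th entry of $GS^\top$ equals $\sum_j \alpha^j s_{i,j} = \psi(s_i)$, so $\bar\psi(S) = GS^\top$.

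Combining these, if $s_1,\dots,s_{k-r}$ denote the rows of $S$, then the rows of $S(P^\top)^s$ are $s_i(P^\top)^s$, and therefore
\[
\bar\psi\bigl(S(P^\top)^s\bigr) = \bigl(\psi(s_1)\alpha^s, \dots, \psi(s_{k-r})\alpha^s\bigr) = \alpha^s \, GS^\top.
\]
Writing a general element $M = \sum_{s=0}^{k-1} v_s (P^\top)^s \in \F_q[P^\top]$ with $v_s \in \F_q$ and applying $\bar\psi$ linearly gives
\[
\bar\psi(SM) = \sum_{s=0}^{k-1} v_s \alpha^s \, GS^\top = \psi(v_0,\dots,v_{k-1}) \cdot GS^\top.
\]
As $M$ varies over $\F_q[P^\top]$, the tuple $(v_0,\dots,v_{k-1})$ varies over $\F_q^k$, so $\psi(v_0,\dots,v_{k-1})$ varies over all of $\F_{q^k}$. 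Hence $\bar\psi(S\F_q[P^\top]) = \{\beta\, GS^\top \mid \beta \in \F_{q^k}\}$, which is precisely the Gabidulin code with generator matrix $GS^\top$. Taking $S = I_k$ yields part (1).

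There is no real obstacle here; the only mild subtlety is checking that $\bar\psi(S) = GS^\top$, and that the map $M \mapsto SM$ is injective on $\F_q[P^\top]$ (otherwise the Gabidulin code with generator $GS^\top$ could collapse to $\{0\}$); but nonzero elements of the field $\F_q[P^\top]$ are invertible, so $SM = 0$ forces $S = 0$, contradicting that $S$ has full rank, which in particular gives $GS^\top \neq 0$.
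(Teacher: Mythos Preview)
Your argument is correct and very close to the paper's. The only organizational difference is that the paper first establishes part~1 directly from~\eqref{e-psiP} and then proves the general identity $\bar\psi(SA)=\bar\psi(A)S^\top$ (valid for arbitrary $A\in\F_q^{k\times k}$), deducing part~2 from part~1; you instead compute $\bar\psi(S(P^\top)^s)=\alpha^s\,GS^\top$ in one stroke via the row identity $\psi(\mathbf v(P^\top)^s)=\psi(\mathbf v)\alpha^s$ and specialize to $S=I_k$ at the end. Both routes hinge on the same two facts, namely $\bar\psi(S)=GS^\top$ and the compatibility of $\psi$ with right multiplication by $P^\top$, so the content is essentially identical.
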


\begin{proof}
\begin{enumerate}
\item
The first statement follows from \eqref{e-psiP}.
\item
Let us collect some simple properties of the maps~$\psi$ and~$\bar\psi$.
For a matrix $A\in\F_q^{k\times k}$ denote the rows by $A_1,\ldots,A_k$.
Then one easily verifies that for any $\mathbf v\in\F_q^k$ and $A\in\F_q^{k\times k}$
one has $\psi(\mathbf vA)=\sum_{i=1}^k v_i\psi(A_i)$.
From this one obtains the identity $\bar\psi(SA)=\bar\psi(A)S^\top$
for any $A\in\F^{k\times k}$.
From 1.\
 along with the $\F_q$-linearity of~$\bar\psi$ it follows that  $\bar\psi(S\F_q[P^\top])$  is the Gabidulin code with generator matrix $GS^\top$.
 \qedhere
\end{enumerate}
\end{proof}

Naturally, the above implies that $\F_q[P]$ is also a rank-metric code in $\F_q^{k\times k}$ with minimum rank distance $k$, and the following is immediate with
Lemma \ref{L-RP}.

\begin{cor}\label{C-Column}
Let $P\in \F_q^{k\times k}$ be the companion matrix of an irreducible monic polynomial in $\F_q[x]$ of degree $k$. Let $r\in\{0,\ldots,k-1\}$.
Then the rank-metric codes $\F_q[P]$ and $\F_q[P^\top]$ can decode any $r$ row erasures and $k-r-1$ column erasures.
\end{cor}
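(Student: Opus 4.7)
The plan is to reduce the corollary directly to Lemma \ref{L-RP} once the minimum rank distance of both $\F_q[P]$ and $\F_q[P^\top]$ has been identified as $k$. Proposition \ref{P-FPGab}(1) already tells us that $\bar\psi(\F_q[P^\top])$ is a Gabidulin code of length $k$ and $\F_{q^k}$-dimension $1$, and then the Singleton property $d_R = \ell - s + 1 = k$ gives immediately that $\F_q[P^\top]$, viewed as a subset of $\F_q^{k\times k}$, is a linear rank-metric code with minimum rank distance $k$.

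For $\F_q[P]$, I would simply observe that transposition is an isometry of $(\F_q^{k\times k}, d_R)$, since $\rk(A) = \rk(A^\top)$, and that $\F_q[P] = \{A^\top : A \in \F_q[P^\top]\}$. Hence $\F_q[P]$ also has minimum rank distance $k$, and it is $\F_q$-linear as well. (Alternatively, one could note that $\F_q[P]$ is a field of size $q^k$, so every nonzero element has rank $k$, which by linearity forces $d_R = k$.)

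With both codes now identified as linear rank-metric codes of minimum rank distance $k$ in $\F_q^{k\times k}$, I would invoke Lemma \ref{L-RP} with the specific choice $c = k-r-1$, noting that $r + c = k - 1 < k$, which is exactly the hypothesis of the lemma. Thus any pattern of $r$ row erasures together with $k-r-1$ column erasures can be corrected by either code.

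No step here is really an obstacle; the only thing to be careful about is to make explicit the passage between $\F_q[P^\top]$ and $\F_q[P]$ via transposition, since Proposition \ref{P-FPGab} is stated only for $P^\top$. Everything else is a direct application of the lemma. The proof should be two or three sentences long.
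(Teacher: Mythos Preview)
Your proposal is correct and follows exactly the paper's approach: the paper simply remarks that $\F_q[P]$ (and $\F_q[P^\top]$) has minimum rank distance~$k$ and then says the corollary is ``immediate with Lemma~\ref{L-RP}''. Your alternative argument---that $\F_q[P]$ is a field, so every nonzero element has rank~$k$---is in fact the very observation the paper made earlier in Section~\ref{S-Prelim}, so either justification you offer matches the paper.
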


One of the most commonly used relationships between rank-metric and subspace codes is the following. From any rank-metric code $C\subseteq \F_q^{k\times (n-k)}$ we can construct a constant dimension code $\C \subseteq \G$ via the \emph{lifting} operation:
\begin{equation}\label{eq:lifting}
\C = \mathrm{lift}(C) := \{\rs (I_k \mid U) \mid U \in C\} .
\end{equation}
If $C$ has minimum rank distance $d_R$ one can easily see that $\C$ has minimum subspace distance $d_S=2d_R$.


\section{Two Models for Symbol Erasures in Linear Random Network Coding}\label{S-Models}

As for the operator channel, we consider the classical single-source multicast network coding setting, where we allow the inner nodes to randomly linearly combine and forward their incoming information. However, we now assume that the edges of the network are erasure channels, instead of $q$-ary symmetric channels. To distinguish from other notions of erasures in network channels, we speak of \emph{symbol erasures}, which are defined as the erasure of a single entry in a vector sent along any edge.

In order to model symbol erasures, we expand the underlying alphabet from $\F_q$ to $\F_q \cup \{?\}$, where $?$ denotes a
symbol erasure.
\begin{defi}
The (commutative) operations with~$?$ are defined as
\begin{equation}\label{e-QuestMark}
    0*? = 0 , \quad x*? = ?, \quad  \quad y+?  = ?,\quad \textnormal{and } ?+?=?=?*?
\end{equation}
for $x\in \F_q^*$ and $y \in \F_q$.
\end{defi}

Since every $k$-dimensional subspace $\Uvs \leq \F_q^n$ can be described by a basis matrix $U\in \F_q^{k\times n}$, we can model the channel as a matrix channel,
instead of a vector space channel (as the operator channel). It turns out that for our purposes the matrix description is advantageous over the subspace description.

For simplicity we first describe the erasure-free channel model. The input of the channel is a basis matrix $U\in \F_q^{k\times n}$ of some vector space $\Uvs \in \G$. The output is
\begin{equation*}
    R = A U \in \F_q^{k\times n},
\end{equation*}
where $A\in \F_q^{k\times k}$ is the representation of the random operations of the inner nodes of the network channel. Clearly, if $A$ has full rank, $R$ is simply another matrix representation of the subspace $\Uvs$. If $A$ does not have full rank, then $\rs(R)$ is a subspace of $\Uvs$. This rank deficiency is called a \emph{deletion}.

\medskip

We now allow symbol erasures to happen along the edges of the network.
In the (random) network coding literature two models have been proposed to deal with symbol erasures.
First, K\"otter/Kschi\-schang~\cite{ko08} proposed that one can use the operator channel and consider a vector with an erasure as faulty and ignore it at the receiving node,
see \cite[p.~3581]{ko08}.
This could possibly lead to a deletion, i.e., a dimension loss of the codeword.
The K\"otter-Kschischang model with only symbol erasures can thus be described as follows.

\medskip

\textbf{Row Erasure Channel Model.}
Define the \emph{row deletion operator} $\rho$ on the matrix space $(\F_q\cup\{?\})^{k\times n}$ to delete every row of the matrix that contains an erasure.
If the channel takes as input a matrix $U \in \F_q^{k\times n}$, we may write the output as
\begin{equation*}
   \hat{R}=\rho(AU+E )   \ \in \F_q^{(k-r)\times n} ,
\end{equation*}
where $E\in \{0, ?\}^{k\times n}$ is the \emph{symbol erasure matrix} such that $r$ rows contain an erasure, and $A\in\F_q^{k\times k}$ represents the channel operation matrix.
The assumption that the channel ignores a partially erased vector right at the receiving node is taken care of by this model by taking~$A$ suitably (e.g., if the last~$r$ rows are erased, choose~$A$ as a block diagonal matrix with an $r\times r$-identity in the last block and such that the first block represents the downstream channel operations on the non-erased vectors).
Clearly, $\rowsp(\hat{R})$ is a subspace of $\rowsp(U)$. Note that, as in the erasure-free case, $A$ does not necessarily have full rank.
If it does not have full rank, this corresponds to even more deletions than given by $\rho$.
Thus, if we work on a row erasure channel with no deletions, we may assume that~$A$ has full rank~$k$.\\
Instead of deleting the rows with erasures in them, we can also fill the respective rows with ?'s. Then we can equivalently model the output of the channel as
$$\hat{R}=AU+ E\mathbf 1_n ,$$
where $\mathbf 1_n\in\F_q^{n\times n}$ is the matrix with all entries equal to~$1$.
Since  a symbol erasure leads to erasing or disregarding the entire affected vector at the receiving node, we call this channel model the \emph{row erasure channel (REC) model}.

\medskip

The second model, dealing with symbol erasures (and more generally symbol errors), has been introduced by Skachek/Milenkovic/Nedi{\'c}~\cite{sk13}.
Suppose that a symbol erasure appears in the $i$th entry of a certain vector, say~$\mathbf v$.
In this model the node does not delete the affected vector, but rather transmits it as usual, using the identities in~\eqref{e-QuestMark}. Thus, at the receiver side all vectors that were produced as linear
combinations involving~$\mathbf v$ have an erasure in the $i$th entry.
This is regardless of where in the network the erasure occurred, which justifies to assume the worst case that erasures occur at the source.
Randomness of the network then requires us to assume that all received vectors have an erased $i$th entry.\footnote{
Our use of the terminology `symbol erasure' differs from the use in~\cite{sk13}. In the latter it is used for describing an erased entry for all vectors obtained by the receiver. 
We will call this a `column erasure'.}

\medskip

\textbf{Column Erasure Channel Model.}
Define the \emph{column erasure operator} $\gamma$ on the matrix space $(\F_q\cup\{?\})^{k\times n}$ to replace every column of the matrix that contains at least one symbol erasure with an all-erasure column.
The channel takes as input a matrix $U \in \F_q^{k\times n}$ and outputs
\begin{equation}\label{e-SMN}
  \tilde{R} = \gamma(A U + E) \quad \in \left(\F_q\cup\{?\}\right)^{k\times n},
\end{equation}
where $A\in \F_q^{k\times k}$ is the representation of the random operations of the inner nodes of the network channel and $E\in \{0, ?\}^{k\times n}$ is the symbol erasure matrix.
We can equivalently write
$$\tilde{R}=AU+\mathbf 1_k E ,$$
where $\mathbf 1_k\in\F_q^{k\times k}$ is the matrix with all entries equal to~$1$.
Note that this model does not distinguish between symbol erasures occurring in the same transmitted vector and those in different vectors.
Since a single symbol erasure at the $i$th entry results in a completely erased $i$th column of the received matrix, we call this model the \emph{column erasure channel (CEC) model}.

\medskip

The above channel models represent what the receiver sees. The effect of a single symbol erasure at some edge in the network can be quite different.
For instance, an affected vector at some inner node does not have any implications if it is not transmitted further, e.g., if the respective scalar of the linear combination at the inner node is zero.
However, because of the randomness of the network we cannot distinguish such cases and thus have to assume that every erasure will propagate as much as possible through the network. This assumption was also done in \cite{sk13} and more explanations on this assumption can be found in there.
Therefore, we have the following worst case scenarios:

\begin{lem}\label{L-WCSell}
Suppose~$\ell$ symbol erasures happened (that is, $\ell$ entries of $E$ are a $?$).  In the worst case we have
\begin{enumerate}
\item $\ell$ row deletions in the received matrix in the REC.
      For this to happen the erasures have to appear in~$\ell$ different rows of~$E$.
\item $\ell$ column erasures in the received matrix in the CEC.
      For this to happen the erasures have to appear in~$\ell$ different columns of~$E$.
\end{enumerate}
\end{lem}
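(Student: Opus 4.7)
The plan is to observe that the statement reduces to elementary counting about where the $\ell$ erasure symbols of~$E$ can be located, together with the definitions of the operators $\rho$ and $\gamma$ introduced in the two channel models. Since each erasure is a single $?$ placed in exactly one entry $(i,j)$ of $E\in\{0,?\}^{k\times n}$, having $\ell$ erasures in~$E$ means the set of affected rows has size at most $\ell$, and the set of affected columns has size at most $\ell$, with equality in either case iff no two erasures share a row (respectively, a column).

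For part~1, I would argue as follows. By definition, $\rho$ removes every row of $AU+E$ that contains at least one~$?$. A row of $AU+E$ contains a $?$ if and only if the corresponding row of $E$ does, thanks to the arithmetic rules in~\eqref{e-QuestMark} (an entry is in $\F_q$ precisely when neither the linear combination coming from $AU$ nor the corresponding entry of $E$ introduces a~$?$; only the $E$-entry can, since $A$ and $U$ are over $\F_q$). Hence the number of deleted rows equals the number of rows of~$E$ that contain at least one $?$. This number is clearly bounded by $\ell$, and the bound is attained exactly when the $\ell$ erasures lie in $\ell$ distinct rows. Thus up to $\ell$ rows (and no more) can be lost, giving the worst case claim for the REC.

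For part~2, the argument is dual. The operator $\gamma$ replaces every column of $AU+E$ that contains at least one $?$ by an all-$?$ column. Again a column of $AU+E$ contains a~$?$ iff the corresponding column of $E$ does. Therefore the number of fully erased columns in $\tilde R$ equals the number of columns of~$E$ that contain at least one $?$, which is at most $\ell$, with equality precisely when the $\ell$ erasures are spread over $\ell$ distinct columns. This yields the worst case claim for the CEC.

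There is no substantive obstacle in this proof: once one identifies that the $?$-entries in $AU+E$ occur exactly at the positions of the $?$-entries in $E$ (a direct consequence of the operations in~\eqref{e-QuestMark}), both assertions are immediate counting statements about the support of~$E$. The only mild subtlety is spelling out the identification between $?$-positions of $E$ and of $AU+E$, which I would do briefly before invoking the definitions of $\rho$ and $\gamma$.
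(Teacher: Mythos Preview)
Your argument is correct. The paper does not give a formal proof of this lemma at all: it is stated immediately after the channel-model descriptions and treated as self-evident from the definitions of~$\rho$ and~$\gamma$. Your write-up is exactly the natural unpacking of those definitions, so there is nothing to compare---you have simply supplied the details the authors left implicit.
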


Note again that the location of the erasures in the matrix~$E$ is only a necessary condition for the worst case because
in a specific instance of the network the erasure may not occur in any linear combination that is transmitted downstream.
However, throughout the paper, we assume the worst case where symbol erasures affect a maximum number of vectors.

\begin{rem}
In the following three sections we assume that the channel operation matrix $A$ has full rank, i.e., that no deletions have occurred during the transmission. The case with deletions will be handled in Section \ref{sec:simul}. From an application point of view deletion-free transmission can be achieved e.g. by using a fountain mode, as explained in \cite{si17}, or simply by declaring a failure when the received space has lower dimension than required. In the latter case the probability that $A$ is rank deficient tends to zero with growing field size $q$ or dimension $k$.
\end{rem}


\section{Spread Codes and Symbol Erasures}\label{sec:spreads}

In this section we investigate the performance of spread codes in $\G$ in the row erasure channel and the column erasure channel, assuming that no deletions occurred.
For this we first make a worst case analysis and then a more detailed analysis, counting the exact number of erasure matrices $E\in \{0,?\}^{k\times n}$ (which we call \emph{erasure patterns})  that can be decoded by the receiver for any random linear combinations taken at the inner nodes. For simplicity we include the zero matrix in this count, although this technically corresponds to no erasures at all.


\subsection{Spread Codes in the Row Erasure Channel (REC)}\label{S-SpreadKK}
We first investigate the capability of spread codes with respect to symbol erasure decoding in the row erasure channel model.

\begin{thm}\label{thm:classKK}
Let $\C\subseteq\G$ be a spread code.
In the REC, the code~$\C$ can correct any erasure pattern $E\in \{0,?\}^{k\times n}$ with at most $k-1$ nonzero entries.
On the other hand, there exist erasure patterns in $\{0,?\}^{k\times n}$ with $k$ nonzero entries that cannot be corrected.
Thus, the symbol erasure correction capability in the classical sense is $k-1$  for the REC.
\end{thm}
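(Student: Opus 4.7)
The plan is to leverage the defining property of a spread: distinct codewords intersect trivially in $\F_q^n$. Suppose $E \in \{0,?\}^{k\times n}$ has at most $k-1$ nonzero entries. By Lemma~\ref{L-WCSell}(1), the number of rows deleted by the operator $\rho$ is at most $k-1$, regardless of how the erasures are distributed among rows. Hence, assuming $A$ is invertible, the received matrix $\hat R=\rho(AU+E)$ has at least one remaining row.

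Since $A$ has full rank $k$, the rows of $AU$ form a basis of $\Uvs=\rs(U)$. The rows of $\hat R$ are exactly those rows of $AU$ not hit by any $?$ in $E$; in particular they are linearly independent and span a nonzero subspace $\rs(\hat R)\subseteq\Uvs$. For any $\Vvs\in\C$ distinct from $\Uvs$, the spread property gives $\Uvs\cap\Vvs=\{0\}$ and hence $\rs(\hat R)\cap\Vvs=\{0\}$. Since $\rs(\hat R)\neq\{0\}$, it cannot be contained in $\Vvs$. Thus $\Uvs$ is the unique codeword containing $\rs(\hat R)$, and the receiver can recover it by choosing the unique spread element whose row space contains the rows of $\hat R$.

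For the converse I would exhibit a specific bad pattern. Let $E$ have exactly one $?$ in each row (say, all in the first column); this is an erasure pattern with $k$ nonzero entries. Taking $A=I_k$, every row of $AU+E$ contains a $?$, so $\rho$ deletes all $k$ rows and $\rs(\hat R)=\{0\}$. The trivial subspace is contained in every element of $\C$, so no codeword can be singled out and decoding is impossible. The only subtle point in the argument is to keep track of the distinction between the number of symbol erasures $\ell$ and the number of row deletions $r$; Lemma~\ref{L-WCSell} encapsulates exactly the bound $r\le\ell$ needed in both directions, and beyond that the claim is an immediate consequence of the spread property.
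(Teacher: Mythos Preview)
Your proof is correct and follows the same outline as the paper's: bound the number of deleted rows via Lemma~\ref{L-WCSell}, then argue that the surviving rows pin down the codeword; for the converse, exhibit~$k$ erasures spread over all~$k$ rows so that nothing survives.

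The one difference worth noting is how the decodability step is justified. The paper simply cites \cite[Thm.~2]{ko08}, which says that a constant-dimension code of minimum subspace distance~$2k$ corrects $k-1$ deletions, and applies this black-box to the spread. You instead prove this directly from the defining property of a spread: any nonzero vector of~$\F_q^n$ lies in exactly one spread element, so a single surviving row already identifies~$\Uvs$. Your argument is more self-contained and makes explicit why spreads are special here, at the cost of not connecting to the general deletion-correction theory. Either route is fine.
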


\begin{proof}
From~\cite[Thm.~2]{ko08} we know that in the REC-model, the code~$\C$ can correct $k-1$ deletions (i.e., dimension losses).
Lemma~\ref{L-WCSell} shows that in the worst case~$k-1$ symbol erasures lead to $k-1$ row erasures.
Similarly, Lemma~\ref{L-WCSell} implies that in the worst case $k$ symbol erasures lead to $k$ row erasures. The resulting empty matrix cannot be decoded.
\end{proof}

However, if we consider all possible erasure patterns, there are a lot more that we can actually correct. Since erasure patterns are represented by the erasure matrices $E\in \{0,?\}^{k\times n}$, we will count the number of these matrices that are correctable at the receiver side.

\begin{thm}\label{T-KKcount}
Considering the REC, there are
\[
     2^{kn}  - (2^n-1)^k
\]
symbol erasure patterns $E\in \{0,?\}^{k\times n}$ that can be corrected by  a spread code $\C\subseteq \G$.
\end{thm}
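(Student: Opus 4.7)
The plan is to count the erasure patterns $E$ that are \emph{not} correctable and subtract from the total. The total number of patterns in $\{0,?\}^{k\times n}$ is $2^{kn}$, so the task reduces to counting bad patterns and showing the count is $(2^n-1)^k$.

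First I would translate the REC condition into a combinatorial condition on~$E$. Recall that the received matrix is $\rho(AU+E)$, where $A\in\F_q^{k\times k}$ has full rank (no deletions are assumed) and $\rho$ deletes every row containing a $?$. Thus, if $E$ has $r$ rows carrying at least one $?$, then the receiver sees a $(k-r)\times n$ submatrix whose rows are $k-r$ of the rows of $AU$. Because $A$ is invertible, \emph{any} $k-r$ rows of $AU$ are $\F_q$-linearly independent, so they span a $(k-r)$-dimensional subspace of $\rs(U)=\Uvs$.

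Next I would invoke the spread property. Since the elements of a spread code intersect pairwise trivially, any nonzero subspace $\Wvs\leq\Uvs$ is contained in exactly one spread element, namely $\Uvs$ itself. Hence, as long as $k-r\geq 1$, i.e.\ $r\leq k-1$, the decoder can uniquely recover $\Uvs$ from the received matrix. Conversely, if $r=k$ then $\rho(AU+E)$ is the empty matrix and no information about $\Uvs$ remains, so decoding fails. Thus the pattern $E$ is correctable if and only if at least one row of $E$ is entirely free of $?$'s. This matches the worst-case bound from Theorem~\ref{thm:classKK}, but now gives a sharp row-by-row criterion.

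The final step is a direct count. A single row in $\{0,?\}^n$ has $2^n$ possible patterns, of which exactly $2^n-1$ contain at least one $?$ (excluding the all-zero row). Since the rows of $E$ are independent, the number of patterns in which \emph{every} row carries at least one $?$ is $(2^n-1)^k$. Subtracting from the total $2^{kn}$ yields $2^{kn}-(2^n-1)^k$ correctable patterns. There is no real obstacle in this argument; the only subtle point is verifying that the $k-r$ surviving rows are linearly independent and nonzero, which follows immediately from the assumption that $A$ has full rank together with the trivial intersection property of the spread.
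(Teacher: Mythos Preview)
Your proof is correct and follows essentially the same approach as the paper's own proof: count all $2^{kn}$ patterns, identify the non-correctable ones as exactly those with a $?$ in every row, and count these as $(2^n-1)^k$. The paper's version is terser, simply citing the fact that a spread code corrects $k-1$ deletions (from Theorem~\ref{thm:classKK}) to justify the correctability criterion, whereas you spell out explicitly why the surviving rows are linearly independent and why the spread property then guarantees unique recovery; but the structure and the count are identical.
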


\begin{proof}
Overall we have $ 2^{kn}$ possible erasure patterns.
Since $k-1$ dimension losses can be corrected, the only erasure patterns we cannot correct are the ones that
have a~$?$ in each of the~$k$ rows.
There are $2^n-1$ possibilities for a nonzero row in $\{0,?\}^n$, which results in $(2^n-1)^k$ non-correctable patterns.
As a consequence there are $ 2^{kn} - (2^n-1)^k$ correctable erasure patterns.
\end{proof}


\subsection{Spread Codes in the Column Erasure Channel (CEC)}\label{S-SpreadSMN}

In this section we consider the Desarguesian spread codes presented in Definition~\ref{D-DesSpread} for the CEC.
The following is the analog of Theorem \ref{thm:classKK} for the column erasure channel model.

\begin{thm}\label{thm:classSMN}
Let $n=mk$ and $\C=\mathcal S_q(m,k,P)\subseteq \G$ be a Desarguesian spread code.
On the CEC, the code~$\C$ can correct any erasure pattern $E\in \{0,?\}^{k\times n}$ with at most $k-1$ nonzero columns.
On the other hand, there exist erasure patterns in $\{0,?\}^{k\times n}$ with~$k$ nonzero entries that cannot be corrected.
Thus, the symbol erasure correction capability in the classical sense is $k-1$ for the CEC.
\end{thm}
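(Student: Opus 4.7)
The plan for the positive claim is to reduce it to an injectivity property of a coordinate projection restricted to the spread. Let $\ell$ denote the number of nonzero columns of $E$, and suppose $\ell\leq k-1$. Since $A\in\GL_k$, the non-erased columns of $\tilde R$ and of the codeword matrix $U$ span the same row space, so the decoder can recover $p(\Uvs)$, where $\Uvs=\rowsp(U)$ and $p\colon\F_q^n\to\F_q^{n-\ell}$ is the projection discarding the erased coordinates. Two codewords are indistinguishable at the receiver (for some choice of channel matrices consistent with the given~$E$) precisely when they have the same image under~$p$, so it suffices to show that $p$ is injective on $\mathcal{S}_q(m,k,P)$.

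The core step is a short dimension argument that uses only the defining property of a spread. Suppose $\Uvs_1,\Uvs_2\in\mathcal{S}_q(m,k,P)$ are distinct with $p(\Uvs_1)=p(\Uvs_2)$. For each $u_1\in\Uvs_1$ there is some $u_2\in\Uvs_2$ with $u_1-u_2\in\ker p$, so $\Uvs_1\subseteq\Uvs_2+\ker p$, and by symmetry $\Uvs_2\subseteq\Uvs_1+\ker p$. Thus $\Uvs_1+\Uvs_2\subseteq\Uvs_2+\ker p$. Since distinct members of a spread intersect trivially, $\dim(\Uvs_1+\Uvs_2)=2k$, whereas $\dim(\Uvs_2+\ker p)\leq k+\ell$. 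This forces $\ell\geq k$, contradicting $\ell\leq k-1$, so $\Uvs$ is uniquely recovered.

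For the negative claim I would exhibit an explicit failing pattern. Pick any $B\neq B'$ in $\F_q[P]$ and set
\[
  \Uvs_1=\rowsp(I_k\mid B\mid 0\mid\cdots\mid 0),\qquad \Uvs_2=\rowsp(I_k\mid B'\mid 0\mid\cdots\mid 0),
\]
and let $E\in\{0,?\}^{k\times n}$ have a $?$ in each of the $k$ positions $(1,k+1),\ldots,(1,2k)$ and zeros elsewhere, giving exactly $k$ nonzero entries. Then $\gamma$ erases the entire second block of columns, and with $A=I_k$ both codewords produce the identical received matrix consisting of $I_k$, an all-$?$ second block, and zero blocks thereafter. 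Combining both parts with Lemma~\ref{L-WCSell} (which bounds the number of erased columns by the number of symbol erasures) then yields symbol erasure correction capability exactly $k-1$. The only delicate point in the above is the equivalence set up in the first paragraph between correctability and being determined by the image of $p$, which follows in one line from the invertibility of $A$.
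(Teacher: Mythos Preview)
Your proof is correct and takes a genuinely different route from the paper's.

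The paper argues constructively, exploiting the block structure of the Desarguesian spread: it identifies a block $R_\ell$ of the received matrix with no zero column, picks any invertible completion~$Z$ of that block, multiplies by $Z^{-1}$, and then decodes each block in the rank-metric code $\F_q[P]$ via Lemma~\ref{L-RP}. Uniqueness is established at the end by a subspace-distance computation showing that two candidate decodings $X,Y$ satisfy $d_S(\rs X,\rs Y)\leq 2k-2$, hence coincide in the spread. Your argument bypasses the block machinery entirely: you observe that the receiver recovers $p(\Uvs)$ from the non-erased columns (using only that $A\in\GL_k$) and then prove injectivity of~$p$ on the spread by the two-line dimension count $2k=\dim(\Uvs_1+\Uvs_2)\leq\dim(\Uvs_2+\ker p)\leq k+\ell$. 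This uses only the trivial-intersection property of a spread, so it in fact proves the positive claim for \emph{any} spread in~$\G$, not just the Desarguesian ones. The paper's approach, by contrast, is tied to the $\F_q[P]$-block structure, but has the advantage of being algorithmic and pointing directly toward the decoder developed later in Section~\ref{sec:decoding}.

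Two small remarks. First, your ``precisely when'' in the first paragraph slightly overstates what you need and what you prove: only the forward direction (indistinguishable $\Rightarrow$ same $p$-image) is required for the positive claim, and that is indeed the one-line consequence of $A\in\GL_k$ you cite. Second, your explicit non-correctable pattern for the negative claim is cleaner than the paper's, which merely observes that $k$ column erasures falling into a single block prevent recovery without writing down witnesses.
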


\begin{proof}
Let $\Uvs \in \C$ be a codeword and $U=(U_1 \mid \ldots\mid U_m)$ with $U_i \in \F_q[P]$ its matrix representation.
Let $R=\gamma(AU+E) = (R_1 \mid \ldots\mid R_m)$ be the received matrix, as in \eqref{e-SMN}.
Since we do not consider any deletions,~$A$ is invertible. Therefore, $AU_i$ is either invertible or zero, for $i=1,\dots,m$.
Hence, as a first step, we can decode any block with at least one zero column to a zero block.

Let $R_\ell$ be any block of $R$ that does not contain a zero column.
Then $R_\ell$ coincides with~$A U_\ell$ in the non-erased columns.
Let $H\subseteq\GL_k(q)$ be the set of all invertible matrices that coincide with~$R_\ell$ in the non-erased columns.
Then $H\neq\emptyset$ because $A U_\ell\in H$.
For any $Z\in H$ consider the matrix $Z^{-1}R$.
For $Z=A U_\ell$, the matrix $Z^{-1}R$ agrees with~$U_\ell^{-1}U$ in the non-erased columns.
Hence all blocks of  $Z^{-1}R$ are partially erased matrices from the rank-metric code~$\F_q[P]$  and can thus be decoded using Lemma~\ref{L-RP}.

It remains to show that for any other choice of $Z\in\GL_k(q)$ for which every block of $Z^{-1}R$ can be decoded in
$\F_q[P]$, the decoding leads to the same subspace in~$\C$.
To this end we may assume without loss of generality that $Z_1,Z_2\in\GL_k(q)$ are such that
\[
    Z_1^{-1} R = \gamma\big( ( I \mid X_2 \mid \dots \mid X_m) + E \big) \ \text{ and }\
   Z_2^{-1} R = \gamma\big( ( I \mid Y_2 \mid \dots \mid Y_m) + E \big) ,
\]
where all blocks are decodable  in the rank-metric code $\F_q[P]$.
Decode every block in $\F_q[P]$ and denote the solutions by $X,Y \in \F_q^{k\times n}$, respectively.
Hence, $\rs(X),\rs(Y) \in \C$. We know that $Z_1 X$ and $Z_2 Y$ agree on at least $n-(k-1)$ columns (the non-erased columns of $R$).
Thus, there exists $Q\in \GL_n(q)$ such that
\[
   Z_1XQ = (M\mid A)\ \text{ and }\
   Z_2YQ = (M\mid B),
\]
where $M\in \F_q^{k\times (n-t)}$ and $A,\,B\in \F_q^{k\times t}$ with $t\leq k-1$.
We obtain
\begin{align*}
    d_S(\rs(X), \rs(Y) ) &= d_S (\rs(Z_1 X Q), \rs(Z_2 Y Q))\\
      &= 2\rk\left( \begin{array}{cc}M & A \\ M &B \end{array}\right) -2k = 2(\rk(M) +\rk (A-B) -k) \leq 2k-2 .
\end{align*}
Since $\rs(X),\rs(Y) $ are both codewords of the spread~$\C$, we conclude that they must be equal.
All of this shows that the above described decoding is unique.

For the second statement note that by Lemma~\ref{L-WCSell} in the worst case~$k$ symbol erasures lead to~$k$ column erasures.
If these $k$ column erasures occur in one block, we cannot recover the codeword.
\end{proof}

Thus, the classical symbol erasure correction capability of spread codes is the same in both the REC and CEC.
However, the actual number of correctable erasure patterns is different, as we show in the following.


\begin{thm}\label{lem:k-1}
Let $n=mk$ and $\C = \mathcal S_q(m,k,P)\subseteq \G$ be a Desarguesian spread code.
In the CEC, any column erasure pattern $E\in\{0,?\}^{k\times n}$, for which the matrix of the sent codeword has at most
$k-1$ columns per block affected by erasures and one nonzero block is unaffected by erasures,
can be uniquely decoded.
\end{thm}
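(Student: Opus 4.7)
The plan is to imitate the proof of Theorem \ref{thm:classSMN}, but to exploit the guaranteed unaffected nonzero block as a canonical pivot that yields a clean block-by-block decoding via the column-erasure correction of $\F_q[P]$ from Corollary \ref{C-Column}, thereby replacing the subspace-distance bookkeeping used there with a direct uniqueness statement in $\F_q[P]$.

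First, write the sent codeword as $U=(U_1\mid\ldots\mid U_m)$ with $U_i\in\F_q[P]$, and let $\ell_0$ be the index of the nonzero block that is unaffected by the erasure pattern. Since block $\ell_0$ has no erased columns, we have $R_{\ell_0}=AU_{\ell_0}$, and since $U_{\ell_0}$ is a nonzero element of the field $\F_q[P]$ it is invertible, hence so is $R_{\ell_0}$. From the receiver's point of view such a block is recognizable as a $k\times k$ subblock of $R$ that is entirely in $\F_q$ and has full rank.

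Next, set $Z:=R_{\ell_0}$ and consider $Z^{-1}R$. On the non-erased columns of block $j$ this matrix coincides with $U_{\ell_0}^{-1}U_j\in\F_q[P]$. Since at most $k-1$ columns per block are erased, Corollary \ref{C-Column} applied with $r=0$ guarantees that $U_{\ell_0}^{-1}U_j$ is the unique element of the rank-metric code $\F_q[P]$ consistent with those non-erased columns. Assembling these across $j$ produces the matrix $\hat U=U_{\ell_0}^{-1}U$, whose row space is $\rs(U)$.

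To argue uniqueness of the decoded codeword, suppose another pair $(V,A')\in\C\times\GL_k(q)$ produces the same received matrix $R$. The $?$-pattern of $R$ determines the erased columns, so the $\ell_0$-th block of $V$ is also unaffected, forcing $A'V_{\ell_0}=R_{\ell_0}$ invertible and $V_{\ell_0}\in\F_q[P]\setminus\{0\}$. Repeating the computation shows that $V_{\ell_0}^{-1}V_j$ is the same unique element of $\F_q[P]$ that agrees with $R_{\ell_0}^{-1}R_j$ on the non-erased columns of block $j$, so $V_{\ell_0}^{-1}V=U_{\ell_0}^{-1}U$ and $\rs(V)=\rs(U)$. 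The main, and really only, obstacle is to verify that the per-block application of Corollary \ref{C-Column} genuinely delivers uniqueness within $\F_q[P]$ for up to $k-1$ column erasures, and to phrase the uniqueness step over all admissible pairs $(V,A')$ rather than merely along the particular decoding trajectory.
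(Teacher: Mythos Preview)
Your argument is correct. It differs from the paper's proof, which is a one-line reduction: fixing the unaffected nonzero block (say the first) and pairing it with each remaining block~$i$, the pair $(R_1\mid R_i)$ is a received word for the spread $\mathcal S_q(2,k,P)$ with at most $k-1$ erased columns in total, so Theorem~\ref{thm:classSMN} applied to that smaller spread yields $\rs(U_1\mid U_i)$ uniquely, and these assemble to $\rs(U)$.

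Your route bypasses Theorem~\ref{thm:classSMN} entirely. By inverting with the explicit pivot $Z=R_{\ell_0}=AU_{\ell_0}$ you land each block directly in $\F_q[P]$ up to $k-1$ missing columns, and Corollary~\ref{C-Column} (equivalently Lemma~\ref{L-RP} with $r=0$) gives both the decoding and the uniqueness inside $\F_q[P]$. The uniqueness over all competing pairs $(V,A')$ is then immediate, since the same pivot forces $V_{\ell_0}^{-1}V_j$ and $U_{\ell_0}^{-1}U_j$ to agree on the non-erased columns and hence in $\F_q[P]$. What this buys you is a self-contained argument that does not invoke the subspace-distance bookkeeping hidden inside the proof of Theorem~\ref{thm:classSMN}; what the paper's reduction buys is brevity, since that bookkeeping has already been paid for. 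Your approach is in fact closer in spirit to the decoding Algorithm~\ref{alg1} presented later in the paper, which likewise normalizes by an unaffected block and reads off each $U_{\ell_0}^{-1}U_j$ from a single surviving column.
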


\begin{proof}
Let $\Uvs \in \C$ be a codeword and $U=(U_1 \mid \ldots\mid U_m)$ with $U_i \in \F_q[P]$ its matrix representation.
As before let $R=\gamma(AU+E) = (R_1 \mid \ldots\mid R_m)$ be the received matrix for some $A\in\GL_k(q)$.
Without loss of generality let the first block be nonzero and unaffected by column erasures.
Considering only the first block and one more block, say the $i$th one, we arrive at the situation of Theorem \ref{thm:classSMN} for the spread code $\mathcal S_q(2,k,P)$.
Thus we can uniquely recover $\rs(U_1\mid U_i)= \rs(I_k\mid U_1^{-1}U_i)$ from $(R_1\mid R_i)$ for any $i=2,\dots,m$. This results in the unique codeword $ \rs(I_k\mid U_1^{-1}U_2 \mid \dots \mid U_1^{-1}U_m) = \Uvs$.
%
\end{proof}

The following example shows that the assumption of one unaffected block is necessary for decodability.
\begin{ex}\label{E-Blocks}
In $\cG_2(3,6)$ consider the spread $\mathcal S_2(2,3,P)$, where
\[
    P=\begin{pmatrix}0&0&1\\1&0&0\\0&1&1\end{pmatrix}.
\]
Furthermore, consider the invertible matrices
\[
   A_1=\begin{pmatrix}1&1&1\\0&1&1\\1&1&0\end{pmatrix}, \
   A_2=\begin{pmatrix}1&1&0\\0&1&1\\1&0&0\end{pmatrix}.
\]
Then the matrices
\[
   A_1(I_3\mid P^3)=\left(\!\!\begin{array}{ccc|ccc}1&1&1&0&1&0\\0&1&1&1&0&1\\1&1&0&1&0&0\end{array}\!\!\right)\
   \text{ and }\
   A_2(I_3\mid P^5)=\left(\!\!\begin{array}{ccc|ccc}1&1&0&0&1&1\\0&1&1&1&0&0\\1&0&0&1&0&1\end{array}\!\!\right)
\]
represent different codewords in~$\mathcal S_2(2,3,P)$.
After erasing the last two columns of the first block and the last column of the second block, the resulting matrices are not distinguishable anymore.
This shows that this pattern of at most~$k-1$ column erasures per block is not decodable.
However, if the last block has no erasures we can uniquely reconstruct both codewords.
\end{ex}

For the CEC, the number of correctable erasure patterns depends on the transmitted codeword.
The precise version is as follows (see also Remark~\ref{R-count} after the proof).

\begin{thm}\label{T-SMNcount}
Consider a Desarguesian spread code $\mathcal S_q(m,k,P)\subseteq \G$.
Suppose the row space of the matrix
\[
  U=(0\mid\ldots\mid0\mid I\mid B_{i+1}\mid \ldots\mid B_m)\in\cM
\]
is transmitted over the CEC.
Let~$\ell$ be the number of nonzero blocks~$B_j$ and $N:=2^{k^2} - (2^k-1)^k$.
Then at least $e_\ell:=N^m(1-(\frac{N-1}{N})^{\ell+1})$ symbol erasure patterns $E\in \{0,?\}^{k\times n}$ can be uniquely decoded.
As a consequence, the code $\mathcal S_q(m,k,P)$ can correct on average (at least)
\[
   \eavg:=\frac{N^{m}}{q^n-1}\Big(q^n-\Big[\frac{(q^k-1)(N-1)}{N}+1\Big]^m\Big)
\]
symbol erasure patterns in the CEC.
\end{thm}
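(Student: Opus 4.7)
The plan is to derive $e_\ell$ by a direct enumeration of the erasure patterns satisfying the hypotheses of Theorem~\ref{lem:k-1}, and then to average $e_\ell$ over the codewords of $\mathcal S_q(m,k,P)$ using the binomial theorem.

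First I would count, one block at a time, the number of patterns in $\{0,?\}^{k\times k}$ with at most $k-1$ columns containing a~$?$. The complement consists of the matrices in which every one of the $k$ columns is nonzero in $\{0,?\}^k$, giving $(2^k-1)^k$ such matrices, so exactly $N=2^{k^2}-(2^k-1)^k$ blockwise patterns obey the column bound. Hence across all $m$ blocks there are $N^m$ patterns $E\in\{0,?\}^{k\times n}$ satisfying the first hypothesis of Theorem~\ref{lem:k-1}. Among these I still need at least one of the $\ell+1$ nonzero blocks of the codeword (the identity in position~$i$ together with the $\ell$ nonzero $B_j$'s) to be completely unaffected by erasures. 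I would count the complementary event blockwise: each of the $\ell+1$ distinguished blocks of $E$ must obey the column bound and be nonzero, giving $N-1$ choices, while the remaining $m-\ell-1$ blocks are unrestricted within the column bound, giving $N$ choices. Subtracting $(N-1)^{\ell+1}N^{m-\ell-1}$ from $N^m$ yields $e_\ell=N^m\bigl(1-((N-1)/N)^{\ell+1}\bigr)$.

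For the average I would re-index by the total number $t=\ell+1$ of nonzero blocks in the codeword. Since the lowest-indexed nonzero block of any matrix in~$\cM$ equals $I_k$ by construction, and every subsequent nonzero block lies in $\F_q[P]\setminus\{0\}$, the number of codewords with exactly $t$ nonzero blocks is $\binom{m}{t}(q^k-1)^{t-1}$; as a sanity check these sum to the spread size $(q^n-1)/(q^k-1)$. The average is then $\eavg=\frac{q^k-1}{q^n-1}\sum_{t=1}^{m}\binom{m}{t}(q^k-1)^{t-1}e_{t-1}$. Substituting $e_{t-1}=N^m-N^{m-t}(N-1)^t$ and applying the binomial theorem to both $\sum_{t=0}^m\binom{m}{t}(q^k-1)^t=q^n$ and $\sum_{t=0}^m\binom{m}{t}\bigl[(q^k-1)(N-1)/N\bigr]^t=\bigl[1+(q^k-1)(N-1)/N\bigr]^m$, and then collecting the closed forms, the stated formula for $\eavg$ drops out.

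The main obstacle is conceptual rather than computational: one must verify that the count is based only on patterns for which Theorem~\ref{lem:k-1} guarantees unique decoding, since patterns outside that class may or may not be decodable depending on the channel matrix~$A$. This is exactly why the theorem asserts ``at least'' $e_\ell$ and $\eavg$ rather than equality. A secondary point of care is to keep straight that $\ell$ counts only the nonzero $B_j$'s while the full codeword has $\ell+1$ nonzero blocks, which is what drives the exponent $\ell+1$ in $e_\ell$ and the shift of index between $\ell$ and $t$ in the averaging step.
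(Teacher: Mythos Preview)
Your proposal is correct and follows essentially the same approach as the paper. The only cosmetic differences are that you obtain $e_\ell$ by complementary counting in one step (the paper sums over the number $t\geq1$ of erasure-free nonzero blocks and then simplifies), and you enumerate codewords directly by the total number $t=\ell+1$ of nonzero blocks, whereas the paper first sums separately over the position~$i$ of the identity block and the number~$\ell$ of nonzero $B_j$'s and then collapses the double sum via the hockey-stick identity $\sum_{i=1}^{m-\ell}\binom{m-i}{\ell}=\binom{m}{\ell+1}$.
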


\begin{proof}
In order to normalize the received word we need at least one nonzero block received correctly.
That gives us $\ell+1$ choices.
In each of the remaining $m-\ell-1$ blocks we can correct up to $k-1$ column erasures.
As in the proof of Theorem~\ref{T-KKcount} this yields~$N$ correctable symbol erasure patterns (including $E=0$) per block.
Denoting by~$t$ the number of correct nonzero blocks, we have $\ell+1-t$ blocks with at least one erasure, and thus we
obtain
\[
     e_\ell:=\sum_{t=1}^{\ell+1} \binom{\ell+1}{t}(N-1)^{\ell+1-t}N^{m-\ell-1}
\]
possibilities of correctable symbol erasure patterns for all blocks combined (including the zero blocks).
Via the term for $t=\ell+1$ this count includes the erasure pattern $E=0_{k\times n}$.
The above simplifies to
\begin{align}
  e_\ell&=N^{m-\ell-1}\Big(\sum_{t=0}^{\ell+1} \binom{\ell+1}{t}(N-1)^{\ell+1-t}-(N-1)^{\ell+1}\Big)\nonumber\\
     &=N^{m-\ell-1}(N^{\ell+1}-(N-1)^{\ell+1})=N^m\Big(1-\big(\frac{N-1}{N}\big)^{\ell+1}\Big),\label{e-el}
\end{align}
as stated.
For the second statement note
that there are $\binom{m-i}{\ell}(q^k-1)^{\ell}$ matrices in~$\cM$ of the form~$U$ as given in the theorem.
Hence the average number of correctable symbol erasure patterns is
\begin{align}
\eavg&=\frac{\sum_{i=1}^m\sum_{\ell=0}^{m-i}\binom{m-i}{\ell}(q^k-1)^\ell e_\ell}{(q^n-1)/(q^k-1)}
   =\frac{1}{q^n-1}\sum_{\ell=0}^{m-1}(q^k-1)^{\ell+1}e_\ell\sum_{i=1}^{m-\ell} \binom{m-i}{\ell}\nonumber\\
   &=\frac{1}{q^n-1}\sum_{\ell=0}^{m-1}\binom{m}{\ell+1}(q^k-1)^{\ell+1}e_\ell.\label{e-eavg}
\end{align}
Using the expression for~$e_\ell$ we further derive
\begin{align*}
\eavg&=\frac{N^m}{q^n-1}\Big[\sum_{\ell=0}^{m-1}\binom{m}{\ell\!+\!1}(q^k-1)^{\ell+1}
   -\sum_{\ell=0}^{m-1}\binom{m}{\ell\!+\!1}\Big(\frac{(q^k-1)(N-1)}{N}\Big)^{\ell+1}\Big]\\
   &=\frac{N^m}{q^n-1}\Big[\sum_{\ell=0}^{m}\binom{m}{\ell}(q^k-1)^{\ell}
   -\sum_{\ell=0}^{m}\binom{m}{\ell}\Big(\frac{(q^k-1)(N-1)}{N}\Big)^{\ell}\Big]\\
   &=\frac{N^m}{q^n-1}\Big[q^n-\Big(\frac{(q^k-1)(N-1)}{N}+1\Big)^m\Big],
\end{align*}
where the last step follows from $n=mk$.
\end{proof}

\begin{rem}\label{R-count}
For simplicity we only counted the erasure patterns discussed in Theorem~\ref{lem:k-1}.
If $m<k$ the number of correctable erasure pattern is even higher according to Theorem~\ref{thm:classSMN}, since we can also correct erasure
patterns affecting each of the $m$ blocks, as long as at most $k-1$ columns are erased.
\end{rem}

The following remark depicts another scenario where more erasure patterns than stated in Theorem \ref{T-SMNcount} can be corrected.

\begin{rem}
Note that $e_0=N^{m-1}$ is the number of correctable symbol erasure patterns in the case that the only nonzero block in the
matrix~$U$ is the identity matrix (regardless of its position).
In this case, we can in fact correct more symbol erasures.
Indeed,~$N^{m-1}$ gives us the number of symbol erasures that let us recover the $m-1$ zero blocks.
But that information is already sufficient to conclude that the remaining block has to be the identity matrix.
In other words, we can tolerate up to~${k^2}$ symbol erasures in that block.
Hence for these particular matrices the number of correctable symbol erasure patterns is $N^{m-1}2^{k^2}$.
\end{rem}

The quite complicated formula for $\eavg$ from Theorem \ref{T-SMNcount} can asymptotically be simplified to $mN^{m-1}$, as shown in the following.
Intuitively, $mN^{m-1}$ can be interpreted as the number of all erasure patterns $E=(E_1\mid\ldots\mid E_m)\in\{0,?\}^{k\times n}$, where one block of~$E$ is completely zero and all other blocks have at least one complete zero column.

\begin{prop}\label{P-eavgLimit}
Recall that $N:=2^{k^2} - (2^k-1)^k$. Fix~$m\in\N$ and let $n=mk$. Then
\[
  \lim_{k\to\infty}\,\frac{\eavg}{mN^{m-1}}=1.
\]
\end{prop}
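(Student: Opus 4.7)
The plan is a direct asymptotic expansion of the closed-form expression for $\eavg$. First I would simplify the inner bracket by noting that $\frac{(q^k-1)(N-1)}{N}+1 = q^k-\frac{q^k-1}{N}=q^k(1-\epsilon_k)$, where $\epsilon_k:=(q^k-1)/(Nq^k)$. Since $n=mk$, this rewrites the formula as
\[
\eavg=\frac{N^m q^{n}}{q^n-1}\bigl[1-(1-\epsilon_k)^m\bigr].
\]

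The first key step is to verify that $N\to\infty$ (in fact doubly exponentially) as $k\to\infty$, so that $\epsilon_k\to 0$. Writing $N=2^{k^2}\bigl(1-(1-2^{-k})^k\bigr)$ and using $\ln(1-2^{-k})=-2^{-k}+O(4^{-k})$, I get $1-(1-2^{-k})^k=k\,2^{-k}+O(k^2\,4^{-k})$, hence $N\sim k\cdot 2^{k^2-k}$. With $\epsilon_k\to 0$, the binomial expansion yields $1-(1-\epsilon_k)^m = m\epsilon_k+O(\epsilon_k^2)$ (the implied constant depending only on the fixed $m$).

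Substituting this back and dividing by $mN^{m-1}$ gives
\[
\frac{\eavg}{mN^{m-1}}=\frac{Nq^n}{m(q^n-1)}\bigl[m\epsilon_k+O(\epsilon_k^2)\bigr]=\frac{Nq^n\epsilon_k}{q^n-1}\bigl(1+O(\epsilon_k)\bigr).
\]
Since $N\epsilon_k=(q^k-1)/q^k$, the prefactor simplifies to $\frac{q^{(m-1)k}(q^k-1)}{q^{mk}-1}=\frac{1-q^{-k}}{1-q^{-mk}}$, which tends to $1$ as $k\to\infty$ with $m$ fixed. Together with $\epsilon_k\to 0$ this gives the limit $1$. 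The computation is entirely routine; the only point needing care is confirming that $N\to\infty$ so that the Taylor expansion controls the error term, but the doubly-exponential growth of $N$ makes this a non-obstacle.
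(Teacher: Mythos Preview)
Your proof is correct. Both you and the paper start from the closed form
\[
\eavg=\frac{N^{m}}{q^n-1}\Big(q^n-\Big[\frac{(q^k-1)(N-1)}{N}+1\Big]^m\Big),
\]
but the subsequent computations differ. The paper expands the $m$th power by the binomial theorem, then expands each factor $(N-1)^i$ again by the binomial theorem, obtaining a finite double sum in which the $j=i$ terms cancel against $N^m q^n$ (via $\sum_i\binom{m}{i}(q^k-1)^i=q^{mk}$); dividing by $mN^{m-1}$ and letting $k\to\infty$ termwise, every summand except $(i,j)=(m,m-1)$ vanishes because either $(q^k-1)^i/(q^{mk}-1)\to 0$ or $N^{1+j-i}\to 0$. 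Your route instead rewrites the bracket as $q^k(1-\epsilon_k)$ with $\epsilon_k=(q^k-1)/(Nq^k)$, reducing the problem to the single Taylor step $1-(1-\epsilon_k)^m=m\epsilon_k+O(\epsilon_k^2)$ and a straightforward simplification $N\epsilon_k=(q^k-1)/q^k$. The two are logically equivalent (your $\epsilon_k$-expansion is exactly the diagonal $j=i-1$ of the paper's double sum), but your version is shorter, isolates the small parameter explicitly, and as a by-product yields an error term of size $O(\epsilon_k)=O(1/N)$, which the paper's termwise limit does not display.
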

\begin{proof}
We compute
\begin{align*}
\eavg&=\frac{1}{q^n-1}\Big[N^mq^n-\sum_{i=0}^m\binom{m}{i}(q^k-1)^i(N-1)^iN^{m-i}\Big]\\
     &=\frac{1}{q^n-1}\Big[N^mq^n-\sum_{i=0}^m\binom{m}{i}(q^k-1)^i\sum_{j=0}^i\binom{i}{j}(-1)^{i-j}N^{m+j-i}\Big]\\
     &=\frac{1}{q^n-1}\Big[N^m\underbrace{\Big(q^n-\sum_{i=0}^m\binom{m}{i}(q^k-1)^i\Big)}_{=0}
        -\sum_{i=0}^m\binom{m}{i}(q^k-1)^i\sum_{j=0}^{i-1}\binom{i}{j}(-1)^{i-j}N^{m+j-i}\Big]\\
     &=-\sum_{i=0}^m\binom{m}{i}\frac{(q^k-1)^i}{q^{mk}-1}\sum_{j=0}^{i-1}\binom{i}{j}(-1)^{i-j}N^{m+j-i}.
\end{align*}
Hence
\[
  \frac{\eavg}{mN^{m-1}}=\frac{-1}{m}\sum_{i=0}^m\binom{m}{i}\frac{(q^k-1)^i}{q^{mk}-1}\sum_{j=0}^{i-1}\binom{i}{j}(-1)^{i-j}N^{1+j-i}.
\]
Since $\lim_{k\to\infty}\frac{(q^k-1)^i}{q^{mk}-1}=0$ for $i<m$ and $\lim_{k\to\infty}N^{1+j-i}=0$ for $j<i-1$ we conclude
\[
  \lim_{k\to\infty}\,\frac{\eavg}{mN^{m-1}}=\frac{-1}{m}\binom{m}{m}\frac{(q^k-1)^m}{q^{mk}-1}\binom{m}{m-1}(-1)^{1}N^{0}=1.
  \qedhere
\]
\end{proof}

\subsection{Comparison of Symbol Erasure Correction Capabilities}\label{S-CompSpread}

In this section we compare the symbol erasure correction capabilities of the REC and CEC for spread codes.
As shown in Theorems \ref{thm:classKK} and \ref{thm:classSMN} the classical symbol erasure correction capability is equal for both channel models, namely $k-1$.

However, as we show next, the number of decodable erasure patterns in the CEC exceeds
the number of decodable erasure patterns in the REC by an exponential factor.

\begin{thm}
Let $n=mk$ and consider a Desarguesian spread code  $\mathcal S_q(m,k,P)\subseteq \G$.
Denote the number of correctable erasure patterns for the REC by $r(n,k)=2^{kn}- (2^n-1)^k$
(see Theorem~\ref{T-KKcount}) and the average number of correctable erasure patterns for the CEC by
$\eavg(n,k)$ as given in Theorem~\ref{T-SMNcount}.
Then
\[
   \frac{r(n,k)}{\eavg(n,k)}\leq kN\Big(\frac{2^{(k-1)k}}{N}\Big)^m.
\]
As a consequence, for any fixed~$k$ we have
$\lim_{m\to\infty}\frac{r(n,k)}{\eavg(n,k)}=0$.
\end{thm}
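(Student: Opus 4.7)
The plan is to bound the numerator from above and the denominator from below, both in the simplest way possible, and then read off the claimed inequality together with the limit statement directly.

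First I would control $r(n,k)=2^{kn}-(2^n-1)^k$. Setting $x=2^n$ and applying the mean value theorem to $t\mapsto t^k$ on the interval $[x-1,x]$ gives $x^k-(x-1)^k\le kx^{k-1}$, hence
\[
 r(n,k)\le k\cdot 2^{n(k-1)}=k\cdot 2^{mk(k-1)}.
\]
Next I would show that $\eavg(n,k)\ge N^{m-1}$. From Theorem~\ref{T-SMNcount} the number of correctable erasure patterns for a codeword with $\ell$ nonzero blocks is at least
\[
 e_\ell=N^m\Big(1-\big(\tfrac{N-1}{N}\big)^{\ell+1}\Big).
\]
Since $0<(N-1)/N<1$ for $k\ge 2$, the sequence $e_\ell$ is nondecreasing in $\ell$, so $e_\ell\ge e_0=N^{m-1}$ for every $\ell\ge 0$. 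Averaging over all codewords of $\mathcal S_q(m,k,P)$ preserves this lower bound, giving $\eavg(n,k)\ge N^{m-1}$, independently of the field size~$q$.

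Dividing the two bounds yields
\[
 \frac{r(n,k)}{\eavg(n,k)}\le \frac{k\cdot 2^{mk(k-1)}}{N^{m-1}}=kN\Big(\frac{2^{k(k-1)}}{N}\Big)^m,
\]
which is the desired inequality. For the consequence, I would check that the base $2^{k(k-1)}/N$ is strictly less than $1$ for $k\ge 2$. This reduces to $N>2^{k(k-1)}$, equivalently $(2^k-1)^{k-1}<2^{k(k-1)}$, which is immediate from $2^k-1<2^k$. Thus $(2^{k(k-1)}/N)^m\to 0$ as $m\to\infty$ and the ratio tends to~$0$.

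The only real obstacle is picking the right shape of the intermediate bounds: the numerator must be bounded by $k\cdot 2^{n(k-1)}$ rather than a sharper but clumsier expression, and the denominator must be lower bounded by the minimal value $e_0=N^{m-1}$ rather than the sharper asymptotic $mN^{m-1}$ from Proposition~\ref{P-eavgLimit}. With these choices the powers of $N$ and of $2^{k(k-1)}$ line up perfectly, and both the inequality and the limit drop out without any further computation.
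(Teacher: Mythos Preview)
Your proof is correct and follows essentially the same approach as the paper: bound $r(n,k)\le k\,2^{(k-1)n}$, bound $\eavg(n,k)\ge N^{m-1}$ via $e_\ell\ge e_0=N^{m-1}$, divide, and conclude the limit from $N>2^{k(k-1)}$. The only cosmetic differences are that the paper states the bound on $r(n,k)$ as ``straightforward'' rather than invoking the mean value theorem, and verifies $\eavg\ge N^{m-1}$ by plugging the lower bound on $e_\ell$ into the explicit averaging formula~\eqref{e-eavg} rather than appealing to the general fact that an average of terms each at least $N^{m-1}$ is itself at least $N^{m-1}$.
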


\begin{proof}
First of all, it is straightforward to show that for all $k<n$
\[
    2^{(k-1)n}< r(n,k)< k2^{(k-1)n}.
\]
Thus $N:=2^{k^2}-(2^k-1)^k$ satisfies $N>2^{(k-1)k}$.
Next, by~\eqref{e-el} we have $e_\ell=N^m(1-(\frac{N-1}{N})^{\ell+1})$, and from this one easily derives
$e_\ell\geq N^{m-1}$.
As a consequence, using~\eqref{e-eavg} we obtain
\[
  \eavg(n,k)\geq N^{m-1}\frac{1}{q^n-1}\sum_{\ell=1}^{m}\binom{m}{\ell}(q^k-1)^{\ell}=
  N^{m-1}\frac{1}{q^n-1}\big((q^k-1+1)^m-1\big)
  = N^{m-1},
\]
where the last step follows from $mk=n$.
Thus
\[
  \frac{r(n,k)}{\eavg(n,k)}\leq \frac{k2^{(k-1)n}}{N^{m-1}}=kN\Big(\frac{2^{(k-1)k}}{N}\Big)^m,
\]
as stated.
Now the limit follows from the fact that for fixed~$k$ the fraction $\frac{2^{(k-1)k}}{N}$ is a constant strictly less than~$1$.
\end{proof}

The following figures depict the number of correctable erasure patterns from Theorems~\ref{T-KKcount} and \ref{T-SMNcount} for $q=2$ and $k=3,4$. Recall that $n=mk$.
The graphs show that only for very small~$n$ the row erasure model outperforms the column erasure model;
for growing~$n$ the column erasure model is preferable.

\begin{center}
\includegraphics[width=8cm]{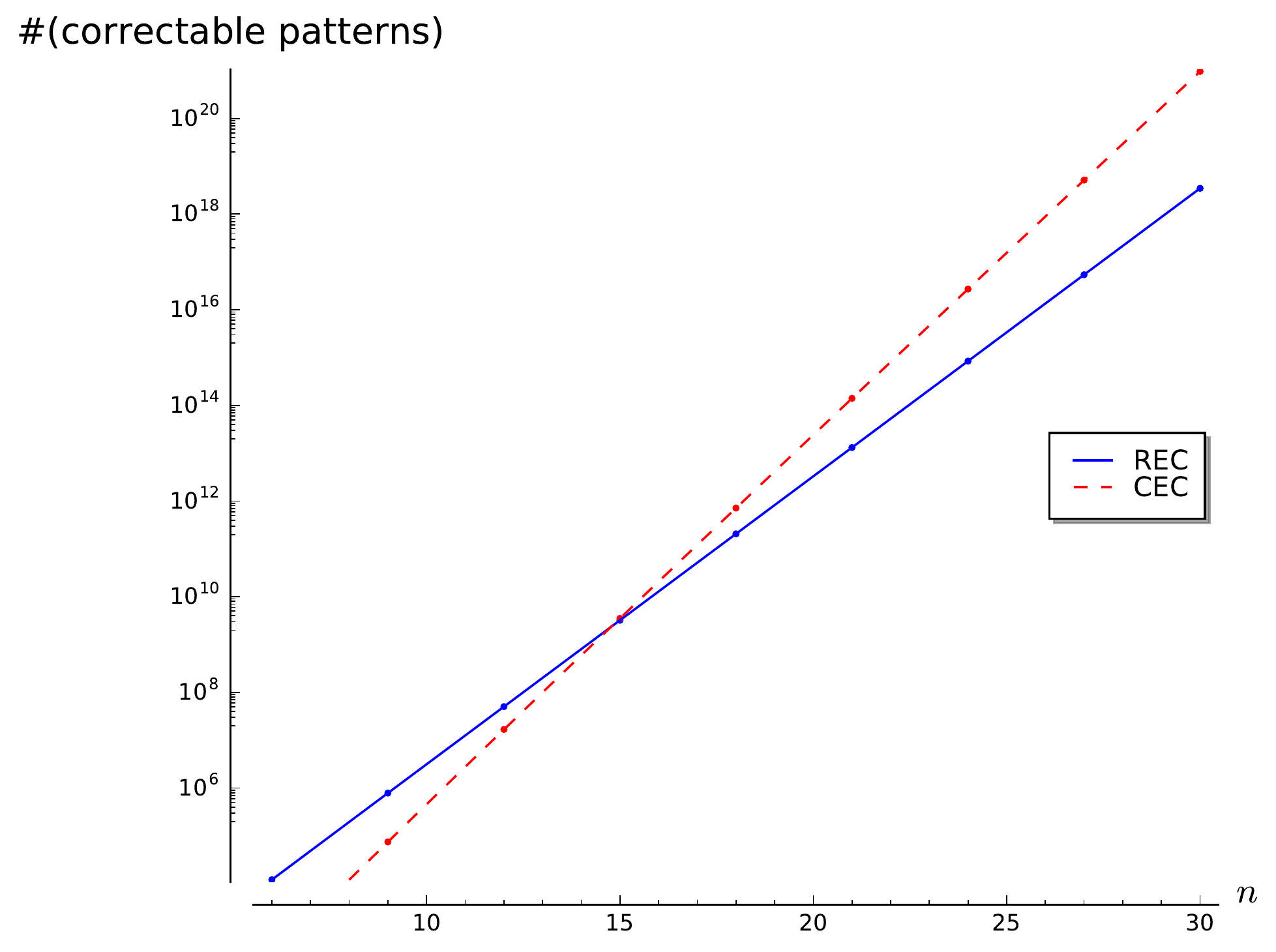}\\
 $k=3$
\end{center}

\begin{center}
\includegraphics[width=8cm]{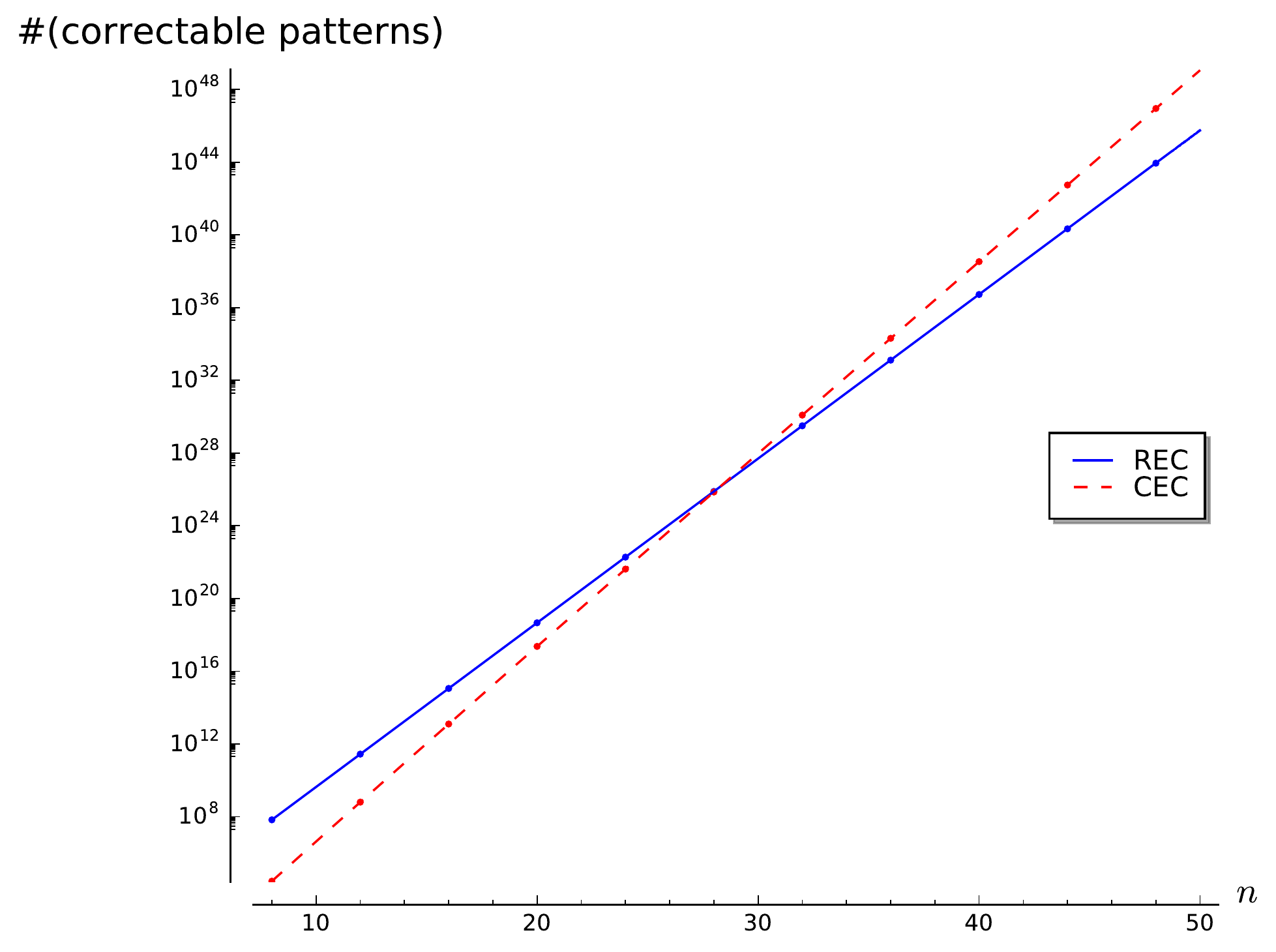}\\
$k=4$
\end{center}
%
%


\section{Comparison to Hybrid Codes}\label{S-Hybrid}

In this section we briefly describe the hybrid codes developed by Skachek et al.~\cite[Sec. V]{sk13} and compare their erasure correction capability
on the CEC to that of spread codes.
The channel model studied in~\cite{sk13} is exactly the CEC introduced in Section~\ref{S-Models}, i.e.,
 a column erasure at position~$i$ is defined as an erased~$i$th entry for \emph{all} received vectors.
The authors suggest to use an $[n,n']_q$-generalized Reed-Solomon (GRS) code interleaved with a subspace code in $\mathcal G_q(k,n')$ to obtain a
good column erasure (and dimension-error) decoding performance.
This implies that the field size must satisfy\footnote{Their construction of hybrid codes can easily be generalized to use extended Reed-Solomon codes instead of GRS codes, which would increase the lower bound on the field size by one. However, this does not make much of a difference for our analysis, therefore we use the original construction with GRS codes.} $q\geq n$.

We note that hybrid codes are designed specifically for the use in the CEC. In the REC the inner Reed-Solomon code would have no purpose, which is why we only consider hybrid codes over the CEC.

\begin{defi}\label{D-Hybrid}
Let $\C\subseteq \mathcal G_q(k,n')$ be a constant-dimension code. 
Furthermore, let $C_{\text{RS}}=\rowsp(G)$ be an $[n,n']_q$-GRS code with generator matrix $G\in\F_q^{n'\times n}$.
Then the subspace code
\begin{equation}\label{e-hyb}
  \Big\{\{\mathbf v G\mid \mathbf v\in \U \}\,\Big|\, \U\in\C\Big\}\subseteq\G,
\end{equation}
is called a $[n,k,n']_q$-\emph{hybrid code}.
\end{defi}

\begin{lem}\cite[Thm.~V.1]{sk13}\label{L-HybEras}
Let~$\C\subseteq \mathcal G_q(k,n')$ be a constant-dimension code with $d_S(\C)=2D$ and $\cH$ be the corresponding $[n,k,n']_q$-hybrid code of the form \eqref{e-hyb}.
Then~$\cH$ can correct up to $D-1$ dimension errors and $n-n'$ column erasures.
\end{lem}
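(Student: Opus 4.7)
The plan is to decode in two stages, exploiting the concatenated structure of the hybrid code from Definition~\ref{D-Hybrid}: first reverse the symbol erasures by a GRS erasure decoder applied row by row, then reverse the dimension errors by a minimum-distance decoder for the underlying constant-dimension code~$\C$.

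To set up the first stage, I would observe that a codeword of $\cH$ has the form $\mathcal V=\{\mathbf v G\mid \mathbf v\in\U\}$ for some $\U\in\C$, so every vector the source injects is of the shape $\mathbf w G$ with $\mathbf w\in\F_q^{n'}$, and under the hybrid model the same is true of every injected error vector. Since intermediate nodes take $\F_q$-linear combinations, every vector traversing the network lies in the GRS code $C_{\text{RS}}$. Each row of the received matrix $\tilde R$ is therefore the restriction of some $\mathbf w G\in C_{\text{RS}}$ to at least $n'$ unerased coordinates. Because $C_{\text{RS}}$ is an $[n,n']_q$-MDS code, any $n'$ columns of $G$ are linearly independent, so $\mathbf w$ is uniquely determined. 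Applying this decoder row by row will produce a matrix whose row space $\Rvs'$ equals the subspace that the operator-channel part of the channel would have delivered in the absence of any symbol erasures.

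For the second stage I would note that, by hypothesis, at most $D-1$ dimension errors are introduced at the subspace level, so $d_S(\U,\Rvs')\le 2(D-1)<2D=d_S(\C)$. A minimum-distance decoder for $\C$ therefore returns $\U$ uniquely from $\Rvs'$, and from $\U$ one recovers~$\mathcal V$.

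The main obstacle is conceptual rather than computational: one must be explicit about the hybrid-channel model in order to justify that every received vector lies in $C_{\text{RS}}$, in particular that injected errors should be viewed as vectors in $\F_q^{n'}$ that are post-encoded by $G$ rather than as arbitrary vectors in $\F_q^n$. Once this modeling point is fixed, the two stages decouple cleanly, and the remainder reduces to the MDS property of GRS codes together with the minimum-distance property of $\C$.
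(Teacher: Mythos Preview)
The paper does not supply its own proof of this lemma; it is quoted directly from \cite[Thm.~V.1]{sk13}. Your two-stage argument---row-wise GRS erasure decoding followed by minimum-subspace-distance decoding in~$\C$---is precisely the decoding procedure the paper later invokes in Section~\ref{sec:decoding} (citing \cite[Sec.~VII]{sk13}), so your approach is both correct and in line with the source.

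Your closing paragraph on the modeling of injected errors is well taken: for the first stage to succeed one needs every received row to lie in $C_{\text{RS}}$, which in the hybrid model of~\cite{sk13} is ensured by treating insertions as $n'$-dimensional vectors post-multiplied by~$G$. In the present paper this subtlety never arises, since from Section~\ref{S-Hybrid} onward only the case $D=1$ (no dimension errors) or deletions-only (Section~\ref{sec:simul}) is considered, and deletions trivially preserve membership in $C_{\text{RS}}$.
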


As a consequence, if we only want to deal with symbol erasures, but no dimension errors, we may choose $D=1$.
In this case the constant-dimension code~$\C$ is the entire Grassmannian $\cG_q(k,n')$ and thus
\begin{equation}\label{e-L}
  \cH:=\Big\{\{\mathbf v G\mid \mathbf v\in \U\}\,\Big|\, \U\in\cG_q(k,n')\Big\}.
\end{equation}
Hence this code can correct up to $n-n'$ column erasures and its cardinality is $\Gaussian{n'}{k}_q$.
Note that we have freedom in choosing~$n'\in\{k+1,\ldots,n-1\}$ in order to optimize the performance of~$\cH$.
The number of correctable column erasures translates straightforwardly into the following.

\begin{lem}\label{L-HybSymbEras}
The number of correctable symbol erasure patterns $E\in\{0,?\}^{k\times n}$ (including the zero matrix) for the hybrid code~$\cH$ from~\eqref{e-L} is
\[
   e_{\mathcal H}:=\sum_{j=0}^{n-n'} \binom{n}{j} (2^k-1)^j.
\]
\end{lem}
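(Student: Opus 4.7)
The plan is to combine the column-erasure correction bound from Lemma~\ref{L-HybEras} with the translation between symbol erasures and column erasures already set up in Section~\ref{S-Models}. First I would specialize Lemma~\ref{L-HybEras} to the case $D=1$ (so that $\C=\cG_q(k,n')$), which is exactly the construction of~$\cH$ in~\eqref{e-L}. This tells us that~$\cH$ corrects any pattern in which at most $n-n'$ columns of the received matrix are entirely erased, i.e.\ at most $n-n'$ columns are replaced by all-$?$ columns by the operator~$\gamma$.

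Next I would translate this bound into a statement about the symbol erasure matrix $E\in\{0,?\}^{k\times n}$ itself. By the definition of~$\gamma$ and the worst-case analysis in Lemma~\ref{L-WCSell}, the number of columns of the received matrix that are fully erased equals the number of columns of~$E$ that contain at least one~$?$. Hence~$\cH$ decodes an erasure pattern~$E$ (in the worst case) if and only if the number of nonzero columns of~$E$ is at most $n-n'$.

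Finally I would count these matrices directly. For any fixed $j\in\{0,1,\ldots,n-n'\}$, the number of matrices $E\in\{0,?\}^{k\times n}$ with exactly $j$ nonzero columns equals $\binom{n}{j}$ (choice of which columns carry an erasure) times $(2^k-1)^j$ (choice of a nonzero vector in $\{0,?\}^k$ for each such column, the remaining columns being fixed to zero). Summing over~$j$ yields $\sum_{j=0}^{n-n'}\binom{n}{j}(2^k-1)^j$, and the $j=0$ term precisely accounts for the zero matrix included in the statement.

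There is no substantive obstacle here; the only point requiring care is the consistent use of the worst-case convention of Lemma~\ref{L-WCSell}, so that any column of~$E$ with one or more $?$-entries counts as a single column erasure rather than several. Once that is pinned down, the equivalence between correctability and the column-count condition is immediate, and the combinatorial identity is elementary.
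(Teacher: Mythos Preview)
Your proposal is correct and is exactly the straightforward argument the paper has in mind: it states the lemma without proof, remarking only that ``the number of correctable column erasures translates straightforwardly'' into the formula. Your specialization of Lemma~\ref{L-HybEras} to $D=1$, the identification of correctable $E$ with those having at most $n-n'$ nonzero columns via~$\gamma$, and the elementary count $\binom{n}{j}(2^k-1)^j$ for each~$j$ together constitute precisely that straightforward translation.
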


In order to compare the performance of a spread code and a hybrid code on the CEC we need to take the rate into account.
Recall from~\cite[Def.~IV.7]{sk13} that the rate of a subspace code~$\C$ in~$\G$ is
$\log_q(|\C|)/(nk)$.
Thus the rate of a spread code~$\cS$ in $\G$, where $n=km$, is
\begin{equation}\label{e-Rspread}
   \Rs=\frac{\log_q(|\cS|)}{{nk}} = \frac{1}{nk}\Big(\log_q\frac{q^n-1}{q^k-1}\Big) \approx \frac{n-k}{nk}=\frac{mk-k}{mk^2}\approx\frac{1}{k}.
\end{equation}
On the other hand, the rate of the hybrid code~$\cH$ as in~\eqref{e-L} is
\begin{equation}\label{e-Rhybrid}
   \Rh=\frac{\log_q(|\cH|)}{{nk}} =\frac{\log_q(|\mathcal G_q(k,n') |)}{nk} = \frac{\log_q\big(\Gaussian{n'}{k}_q\big)}{nk}\approx
   \frac{n'-k}{n},
\end{equation}
where the last approximation follows from
\[
  \log_q\big(\Gaussian{n'}{k}_q\big)=\log_q\Big(\prod_{i=0}^{k-1}\frac{q^{n'-i}-1}{q^{k-i}-1}\Big)
  \approx\log_q\Big(\prod_{i=0}^{k-1}q^{n'-k}\Big)=k(n'-k).
\]


For comparability let us now fix the same dimension~$k$ and length~$n$ for both codes and find~$n'$ so that the hybrid code and the spread have approximately the same rate.
For small~$k$, the rate of the spread code is approximately $(n-k)/nk$ by~\eqref{e-Rspread}, and thus~\eqref{e-Rhybrid} tells us that
we need $(n'-k)/k=(n-k)/nk$. This in turn is equivalent to
\[
    n'=(n-k)/k+k.
\]
Hence by Lemma~\ref{L-HybEras} the hybrid code can correct at most $n-n'=n-n/k-k+1$ column erasures.
From Theorem~\ref{lem:k-1} we know that the spread code $\cS\subseteq\G$ can correct at most $(n/k-1)(k-1)$ erased columns, which is also $=n-n/k-k+1$.
However, the hybrid code can correct any combination of those columns, whereas the spread can only correct certain combinations of the columns (and in some codewords even less).
Especially for large~$n$, compared to~$k$, this works in favor of hybrid codes.

\begin{ex}\label{ex19}
We fix $k=2$ and $n'=n/2+1$ for variable $n$.
Moreover, for the spread codes we fix $q=2$, whereas for the hybrid codes we pick $q$ as the smallest prime power exceeding $n-1$.
We obtain the following data.
$$
\begin{array}{|c||c|c||c|c|}
\hline
   n \phantom{\Big|}&
   \textnormal{rate spread} & \textnormal{rate hybrid} & \eavg \text{ (see Thm.~\ref{T-SMNcount})}& e_{\mathcal H}\text{ (see Lem.~\ref{L-HybSymbEras})}\\\hline
6 \phantom{\Big|}& 0.366 & 0.341 & 100 & 154\\\hline
8 \phantom{\Big|}& 0.401 & 0.379 & 879 & 1789\\\hline
10 \phantom{\Big|}& 0.421 & 0.402 & 7277 & 20686\\\hline
12 \phantom{\Big|}& 0.434 & 0.418 & 58059 &239122\\\hline
14 \phantom{\Big|}& 0.443 & 0.429 & 451041 & 2767444\\\hline 
  \end{array}
$$
One can see that the rate of the spread code is slightly higher while the number of correctable erasure patterns is less compared to the hybrid code. If we increase $n'$ by one, i.e., $n'=n/2+2$, we get the following data.
$$
\begin{array}{|c||c|c||c|c|}
\hline
   n \phantom{\Big|}&
   \textnormal{rate spread} & \textnormal{rate hybrid} & \eavg \text{ (see Thm.~\ref{T-SMNcount})}& e_{\mathcal H}\text{ (see Lem.~\ref{L-HybSymbEras})}\\\hline
6 \phantom{\Big|}& 0.366 & 0.507 & 100 & 19\\\hline
8 \phantom{\Big|}& 0.401 & 0.504 & 879 & 277\\\hline
10 \phantom{\Big|}& 0.421 & 0.502 & 7277 & 3676\\\hline
12 \phantom{\Big|}& 0.434 & 0.501 & 58059 & 46666\\\hline
14 \phantom{\Big|}& 0.443 & 0.501 & 451041 & 578257\\\hline
  \end{array}
$$
In this case the hybrid code has larger rate, but for small $n$ the spread code can correct more erasure patterns.

\end{ex}

\begin{rem}
If we fix $k,n$ to be the same for both codes, then for large~$n$ hybrid codes outperform spread codes with respect to rate and erasure correction capability. However, for small~$n$, there are parameter sets where spread codes have a better rate or better erasure correction capability than hybrid codes. Moreover, spread codes have the immense advantage that they exist over any field,
whereas hybrid codes need a field size $q\geq n$.
\end{rem}


We can also compare spread and hybrid codes without assuming that $k$ and $n$ are the same for both codes. For comparability we fix the rate of the codes to be approximately the same and compare their erasure correction capability, as shown in the next example.

\begin{ex}\label{E-HybSpr}
We start with a hybrid code~$\cH$ in $\cG_{29}(10,25)$, hence $k=10$ and $n=25$ (recall that the field size has to be at least~$n$).
The rate and performance depend on the choice of $n'\in\{k+1,\ldots,n-1\}$.
Let us pick $n'=13$.
Then~\eqref{e-Rhybrid} and Lemma~\ref{L-HybSymbEras} lead to the rate and erasure pattern correction capability
\[
    \Rh=0.12004, \quad \eh=0.68\cdot10^{43},
\]
respectively.
We want to find a spread code over~$\F_{29}$ with approximately the same rate and compare its erasure pattern correction capability with the one of the hybrid code.
Denote the length and dimension of the spread code by~$\tilde{n},\,\tilde{k}$, respectively.
Since the erasure pattern correction capability is the number of correctable erasure matrices $E\in\{0,?\}^{\tilde{k}\times\tilde{n}}$,
a fair comparison should consider the proportion of correctable erasure matrices.
Thus we aim for a spread code with rate $\Rs\approx\Rh=0.12004$ and then want to compare its proportion of correctable erasure patterns to the proportion
\[
   \frac{\eh}{2^{kn}}=10^{-33}.
\]
As for the parameters of the spread code,~\eqref{e-Rspread} and~\eqref{e-Rhybrid} show that $\tilde{k}\approx n/(n'-k)=25/3$.
Let us consider the interval $\{6,\ldots,10\}$ about this value.
For each value of $\tilde{k}$ in $\{6,\ldots,10\}$, we then find the smallest $\tilde{n}=m\tilde{k}$ such that the resulting rate~$\Rs$ is
larger than~$\Rh$. For that code we list the normalized  erasure pattern correction capability $\eavg/2^{\tilde{k}\tilde{n}}$, where~$\eavg$ is
average number of correctable erasure patterns as in Theorem~\ref{T-SMNcount}.
This leads to the following table (for $\tilde{k}>8$ the spread code never has a larger rate).
\[
  \begin{array}{|c|c|c|c|}
    \hline
    [\tilde{k},\tilde{n}]\phantom{\Big|}&\text{Rate $\Rs$} &\eavg/2^{\tilde{k}\tilde{n}}\\ \hline\hline
    [6,24]\phantom{\Big|}& 0.12500 & 10^{-14}\\ \hline
    [7,49]\phantom{\Big|}& 0.12245  & 10^{-22}\\ \hline
    [8,208]\phantom{\Big|}& 0.12019 & 10^{-56}\\ \hline
  \end{array}
\]
By design, in all cases the spread code has a slightly larger rate than the hybrid code~$\cH$.
We observe that for $\tilde{k}=6,7$ the spread code can correct a much larger proportion of erasure patterns than the hybrid code, whereas
for $\tilde{k}=8$ the hybrid code can correct a larger proportion.

In the same way we can choose other values for $n'\in\{k+1,\ldots,n-1\}$.
It turns out that for $n'<13$, the hybrid code is always better (in terms of the proportion of correctable erasure patterns), whereas for
$n'>13$ the spread code is better.
\end{ex}


\section{Decoding Complexities}\label{sec:decoding}

In this section we describe two decoding algorithms for Desarguesian spread codes, one in the row erasure channel and one in the column erasure channel model. We derive their complexity orders and compare them to the complexity of the decoder for hybrid codes from \cite{sk13}, which is based on a decoder for Reed-Solomon codes.

We will make use of the maps~$\phi,\,\psi$, and $\bar\psi$ from~\eqref{e-phi} -- \eqref{e-barpsi}.
Moreover, recall the isomorphism~\eqref{e-DesGrass} between a Desarguesian spread and a Grassmannian.
For decoding a Desarguesian spread $\mathcal S_q(m,k,P)$ or $\mathcal S_q(m,k,P^\top)$  it thus suffices to recover the isomorphic representation of a codeword in $\cG_{q^k} (1,m)$. For more information on message encoding for Desarguesian spread codes see \cite{ho17}.

But even if one wishes to recover the original codeword in $\G$, the complexity of finding a representation in the original~$\mathcal S_q(m,k,P)$ is as follows.

\begin{prop}\label{P-DecGrassDes}
One can obtain a basis matrix of the original codeword in $\mathcal S_q(m,k,P)$ from its representation  in
$\cG_{q^k} (1,m)$ with $O(k^2 m)=O(kn)$ operations over~$\F_q$.
\end{prop}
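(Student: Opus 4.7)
The plan is as follows. Given a representative vector $(u_1, \ldots, u_m) \in \F_{q^k}^m$ of a point in $\cG_{q^k}(1,m)$, the isomorphism~\eqref{e-DesGrass} identifies the corresponding codeword in $\mathcal S_q(m,k,P)$ with the row space of the block matrix $(\phi(u_1) \mid \cdots \mid \phi(u_m))$. Hence the task reduces to computing each block $\phi(u_i) \in \F_q[P]$ from $u_i \in \F_{q^k}$ in $O(k^2)$ operations; summing over the $m$ blocks then yields the claimed $O(k^2 m) = O(kn)$ bound.

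To compute a single block $\phi(u_i)$ efficiently, the approach is to exploit the structural interpretation of $\phi$. Since $P$ represents multiplication by $\alpha$ with respect to the basis $\{1, \alpha, \ldots, \alpha^{k-1}\}$ of $\F_{q^k}$, the matrix $\phi(u_i)$ represents multiplication by $u_i$ in that same basis, so its $j$-th column is the coefficient vector of $u_i\alpha^j$ for $j = 0, 1, \ldots, k-1$. Writing $w_0 \in \F_q^k$ for the coefficient vector of $u_i$ itself (which is part of the input), the $j$-th column of $\phi(u_i)$ is $P^j w_0$, and successive columns can be built iteratively via $w_{j+1} = P w_j$.

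The crucial ingredient is the sparsity of the companion matrix $P$: the product $Pw$ for any $w \in \F_q^k$ is a shift of the entries of $w$ together with a scalar multiple of the column $(-p_0, \ldots, -p_{k-1})^\top$, and hence costs only $O(k)$ operations in $\F_q$. Performing $k-1$ such multiplications produces all $k$ columns of $\phi(u_i)$ at a total cost of $O(k^2)$, and iterating over the $m$ blocks gives the total $O(k^2 m) = O(kn)$.

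The main obstacle this plan sidesteps is the following. A naive implementation that first precomputes the matrices $P^0, P^1, \ldots, P^{k-1}$ and then forms $\phi(u_i)$ as the $\F_q$-linear combination $\sum_{j=0}^{k-1} v_{ij} P^j$, where the $v_{ij}$ are the coordinates of $u_i$ in $\F_q^k$, would incur $\Omega(k^3)$ per block and thus $\Omega(k^3 m)$ overall. Replacing the sum-of-matrices computation by a single Krylov sequence generated by the sparse operator $P$ is precisely what brings the cost down to $O(kn)$.
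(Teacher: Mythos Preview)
Your proposal is correct and follows the same high-level structure as the paper's proof: both reduce the claim to the observation that $\phi$ can be evaluated in $O(k^2)$ operations per block, then multiply by~$m$. The only difference is that the paper simply cites \cite[Lemma~17]{ho17} for the $O(k^2)$ cost of~$\phi$, whereas you supply a self-contained argument via the Krylov sequence $w_{j+1}=Pw_j$ and the sparsity of the companion matrix; your version is thus more explicit but not conceptually different.
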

\begin{proof}
In~\cite[Lemma 17]{ho17} it is shown that the map~$\phi$ can be carried out in $O(k^2)$ operations over the field~$\F_q$. Since this needs to be done for any of the $m$ blocks
of the codeword matrix representation, the statement follows.
\end{proof}

From now on we focus on recovering the spread codeword as an element in $\cG_{q^k} (1,m)$.
In  order to have a unique representation we will always recover the normalized basis vector $(u_1,\ldots,u_m)\in\F_{q^k}^m$, i.e., the basis vector whose first nonzero entry is equal to one. For comparability we also need to recover a unique representation of the hybrid codewords, which is analogously given by their basis matrix in RREF.
We summarize:

\begin{rem}
We will always recover the basis matrix of the respective codeword in reduced row echelon form.
\end{rem}

In Algorithm \ref{alg2} we describe a decoding algorithm for spread codes for the REC.
This is a special case of~\cite[Alg.~1]{ma11j}, where the error space has dimension~$0$.
To use this very simple decoding algorithm we must assume that the Desarguesian spread code is of the form $\mathcal S_q(m,k,P^\top)$ as in Definition~\ref{D-DesSpread}
and where~$P$ is as in~\eqref{e-P}.
Moreover, we assume that the received matrix is decodable, i.e., it contains a nonzero row (without erasures).

\begin{algorithm}
\begin{algorithmic}
\REQUIRE{a received matrix $R=(R_1\mid \dots \mid R_m)\in\F^{k'\times n}$, where $1\leq k'\leq k$ and $n=mk$}
\STATE{find a nonzero row $\mathbf r$ in $R$ and represent it via $\psi$ as $(r_1, \dots , r_m)\in \F_{q^k}^m$}
\STATE{set $\mu := \min_i \{i \mid {r}_i \neq 0\}$ }
\FOR{$i=1,\dots,m$}
\STATE{compute $v_i := {r}_\mu^{-1} {r}_i$ }
\ENDFOR
\RETURN{$(v_1 , \dots , v_m )$}
\end{algorithmic}
\caption{Decoding of Desarguesian spread codes $\mathcal S_q(m,k,P^\top)$ in the REC.}
\label{alg2}
\end{algorithm}

Note that in the first step we used the isomorphism~$\psi$ between~$\F_q^k$ and $\F_q[\alpha]$.
Together with the fact that the matrix descriptions are from $\F_q[P^\top]$, this particular isomorphism guarantees that the output of the algorithm is indeed independent of the row~$r$ that was picked.
For the latter also remember that in a spread code the intersection of two codewords is trivial, and therefore any nonzero vector of a codeword uniquely identifies that codeword.
For further details on the algorithm we refer to~\cite{ma11j}.

To derive the computational complexity of our decoding complexities we first observe that the map~$\psi$ simply rewrites vector coefficients as polynomial coefficients,
and therefore its computational cost can be neglected in the following decoding complexity analyses.


\begin{thm}\label{T-Alg2Compl}
The computational complexity of Algorithm \ref{alg2} is in $O(k^2m)=O(kn)$.
\end{thm}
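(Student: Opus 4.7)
The plan is to analyze each step of Algorithm~\ref{alg2} separately and identify the dominant contribution. First, I would observe that the map~$\psi$ from~\eqref{e-psi} is merely a relabeling: the $k$ entries of a row in $\F_q^k$ are reinterpreted as the coefficients of a polynomial in~$\alpha$, so its evaluation incurs no arithmetic operations over~$\F_q$ and may be ignored in the count.

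Next I would bound the preprocessing steps. Locating a nonzero row of~$R$ requires, in the worst case, scanning all $k' n \le kn$ entries of~$R$ for a nonzero value, which is $O(kn)$ comparisons. Computing the index~$\mu$ requires testing at most~$m$ elements of~$\F_{q^k}$ for being zero; since each such test amounts to checking~$k$ coefficients over~$\F_q$, this contributes $O(km)=O(n)$ operations. Both of these are dominated by what follows.

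The main cost sits in the \textbf{for}-loop, which performs one inversion $r_\mu^{-1}$ in $\F_{q^k}$ and~$m$ multiplications in $\F_{q^k}$. Using the representation $\F_{q^k}\cong\F_q[\alpha]$, schoolbook polynomial multiplication modulo the defining polynomial $p(x)$ of degree~$k$ costs $O(k^2)$ operations over~$\F_q$, and inversion via the extended Euclidean algorithm likewise costs $O(k^2)$ operations over~$\F_q$. Hence the loop contributes $O(k^2 m)$ operations. Adding the preprocessing terms leaves the total at $O(k^2 m)$, and the equality $O(k^2 m)=O(kn)$ follows from $n=mk$.

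The only mild subtlety is the choice of arithmetic model in $\F_{q^k}$: the $O(k^2)$ bound for multiplication and inversion is the standard schoolbook estimate and is the relevant one for a fair comparison with the Reed--Solomon decoder used later for hybrid codes. No asymptotically faster multiplication routine is invoked, so there is no hidden obstacle beyond verifying these standard bounds.
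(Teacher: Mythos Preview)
Your proposal is correct and follows essentially the same approach as the paper: both argue that~$\psi$ is cost-free and that the dominant work is the handful of $\F_{q^k}$-arithmetic operations in the \textbf{for}-loop, each costing $O(k^2)$ over~$\F_q$, for a total of $O(k^2m)=O(kn)$. Your write-up is in fact a bit more careful than the paper's, since you separately account for locating the nonzero row and computing~$\mu$ and explicitly split the loop into one inversion plus~$m$ multiplications, whereas the paper's proof simply bounds the number of $\F_{q^k}$-divisions and multiplies by~$O(k^2)$.
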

\begin{proof}
The algorithm needs at most $k$ divisions over $\F_{q^k}$ inside each of the $m$ blocks for the normalization.
Such a division can be done with $O(k^2)$ operations over $\F_q$.
\end{proof}

Next, in Algorithm \ref{alg1}, we describe a decoding algorithm for Desarguesian spread codes in the CEC.
We assume that the spread code is of the usual form $\mathcal S_q(m,k,P)$ with companion matrix~$P$ as in \eqref{e-P}.
Furthermore, we assume that the received matrix is decodable, in the sense that one nonzero block
has no erasures and that all blocks have at most $k-1$ columns erased (see Theorem~\ref{lem:k-1}).
Recall that every block of a matrix in~$\cM$ is either zero or invertible;  see Definition~\ref{D-DesSpread}.
Thus the unerased columns of a nonzero matrix are nonzero.

\begin{algorithm}
\begin{algorithmic}
\REQUIRE{a received matrix $R=(R_1\mid \dots \mid R_m)\in(\F_q \cup \{?\})^{k\times n}$ with a nonzero block $R_*$ without erasures}
\STATE{compute $R_*^{-1}$}
\FOR{$i=1,\dots,m$}
\IF{$R_i$ is nonzero}
\STATE{find a nonzero column $\mathbf r^{(i)}$ in $R_i$ and let $j_i\in\{0,\ldots,k-1\}$ be the index of the column $\mathbf r^{(i)}$ in~$R_i$}
\STATE{compute $h_i:=\psi(R_*^{-1}\mathbf r^{(i)})$}
\STATE{compute $u_i := \alpha^{-j_i}h_i$}
\ELSE
\STATE{set $u_i=0$}
\ENDIF
\ENDFOR
\STATE{set $\mu := \min_i \{i \mid u_i \neq 0\}$ }
\FOR{$i=1,\dots,m$}
\STATE{compute $v_i := u_\mu^{-1} u_i$ }
\ENDFOR
\RETURN{$(v_1, \dots , v_m )$}
\end{algorithmic}
\caption{Decoding of Desarguesian spread codes $\mathcal S_q(m,k,P)$ in the CEC.}
\label{alg1}
\end{algorithm}


The correctness of Algorithm~\ref{alg1} follows from the proof of Theorem~\ref{lem:k-1}.
The proof shows that, after normalizing the matrix via the nonerased block, we just have to decode every block~$R_*^{-1}R_i$ inside $\F_q[P]$.
Denoting the $j$th column of the matrix $P^\ell$ by $(P^\ell)_{(j)}$ one has $\psi((P^\ell)_{(j)})=\alpha^{j+\ell}$ for all $j=0,\ldots,k-1$.
Since the matrices $R_*^{-1}R_i$ are partially erased matrices from $\F_q[P]$, one then easily derives that~$u_i$ as defined in the algorithm satisfies $\phi(u_i)=R_*^{-1}R_i$.
This shows explicitly how any matrix in~$\F_q[P]$ is fully determined by any of its columns (if we know the position of that column).
From all this we conclude that the output $(v_1,\ldots,v_m)$ represents the desired codeword $\rowsp(v_1,\ldots,v_m)=\rowsp(u_1,\ldots,u_m)$ via the isomorphism~\eqref{e-DesGrass}.

\begin{thm}
The computational complexity of Algorithm \ref{alg1} is in $O(k^2m + k^3)=O(kn + k^3)$.
\end{thm}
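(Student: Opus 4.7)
The plan is to go through Algorithm~\ref{alg1} line by line, bound the cost of each step in terms of operations over $\F_q$, and sum the contributions. As in the proof of Theorem~\ref{T-Alg2Compl}, we may ignore the maps $\psi$ and~$\phi$, which only reshuffle coefficients, and we use the standard facts that multiplication and division in $\F_{q^k}=\F_q[\alpha]$ cost $O(k^2)$ operations over~$\F_q$, and that inversion of a $k\times k$ matrix over~$\F_q$ costs $O(k^3)$.

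First I would handle the preprocessing step outside the main loop: computing $R_*^{-1}$ is a single $k\times k$ matrix inversion over~$\F_q$, hence it contributes $O(k^3)$. This is where the additive $k^3$ term comes from.

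Next I would analyze one pass of the first \textbf{for} loop. Locating a nonzero column in~$R_i$ takes $O(k^2)$ comparisons; the matrix-vector product $R_*^{-1}\mathbf r^{(i)}$ costs $O(k^2)$ operations over $\F_q$; applying~$\psi$ is free; and computing $u_i=\alpha^{-j_i}h_i$ amounts to one multiplication in $\F_{q^k}$, again $O(k^2)$ (the powers $\alpha^{-j_i}$ for $j_i\in\{0,\dots,k-1\}$ can be precomputed once in $O(k^3)$ total and then looked up). Summed over the~$m$ blocks this part contributes $O(k^2m)=O(kn)$.

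Finally, I would bound the normalization loop: determining~$\mu$ is $O(m)$, and the $m$ divisions $v_i=u_\mu^{-1}u_i$ in $\F_{q^k}$ cost $O(k^2)$ each, for a total of $O(k^2m)=O(kn)$. Adding the three contributions gives the claimed bound $O(k^3+k^2m)=O(k^3+kn)$. There is no real obstacle here; the only point that requires care is to make sure each arithmetic operation in $\F_{q^k}$ is correctly costed as $O(k^2)$ rather than $O(1)$ and that the matrix inversion is done once rather than~$m$ times, since otherwise the cubic term would blow up to $O(k^3m)$.
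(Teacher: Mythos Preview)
Your proof is correct and follows essentially the same approach as the paper: the $O(k^3)$ term comes from the single matrix inversion, and the $O(k^2m)$ term from the per-block matrix-vector product and the (at most two) $\F_{q^k}$-multiplications/divisions in each block. Your version is simply more granular in listing the individual substeps, but the accounting is identical.
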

\begin{proof}
The algorithm needs the inversion of a $k\times k$-matrix and at most $m$ multiplications of this inverted matrix with a vector.
With Gaussian elimination the former needs $O(k^3)$ operations, and the latter needs  $O(k^2 m)$ operations with normal matrix multiplication.
Afterwards the algorithm performs at most $2$ divisions over $\F_{q^k}$ inside each of the $m$ blocks.
Since such a division can be done with $O(k^2)$ operations over $\F_q$, the statement follows.
\end{proof}

Thus, if $k\in O(m)$, the complexity orders of Algorithms \ref{alg2} and \ref{alg1} are the same. That is, in this case
the channel models are equivalent from a decoding complexity point of view when using spread codes.

\medskip

As a final comparison, we derive the complexity of decoding hybrid codes in the CEC.
As explained in \cite[Section VII]{sk13} general hybrid codes can be decoded by first decoding all symbol
erasures in the Reed-Solomon code and then decoding the dimension errors in the subspace code.
Since in our case we assume that no dimension errors occurred, we simply have to decode all received vectors
in the Reed-Solomon code. We obtain the following result.

\begin{prop}
$[n,k,n']$-hybrid codes in $\G$ in the CEC can be decoded with a computational complexity in $O(k^3 m^2)=O(kn^2)$.
\end{prop}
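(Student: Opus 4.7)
The plan is to reduce the decoding of the hybrid code to $k$ independent erasure decodings of the underlying GRS code, all sharing the same erasure pattern because we are in the CEC. First I would establish the reduction precisely: writing $U\in\F_q^{k\times n'}$ for a basis matrix of the inner subspace $\cU\in\cG_q(k,n')$, the transmitted matrix is $UG\in\F_q^{k\times n}$, and the received matrix is $R=\gamma(AUG+E)$ for some $A\in\GL_k(q)$ (since we assume no deletions) and some $E\in\{0,?\}^{k\times n}$. Setting $U':=AU$, which is again a basis matrix of $\cU$, each row of $U'G$ is a GRS codeword, and the restriction of each row of $R$ to the non-erased column positions $J:=\{1,\dots,n\}\setminus I$ (with $|I|\leq n-n'$) coincides with the corresponding row of $U'G$ restricted to $J$. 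So recovering $U'$ reduces to decoding $k$ GRS codewords, each with at most $n-n'$ symbol erasures, at a common set of positions.

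Next I would invoke the standard erasure-decoding algorithm for GRS codes, which can be realized either via the Forney algorithm or, equivalently, by Lagrange interpolation from any $n'$ of the non-erased positions, using the Vandermonde-like structure of the GRS generator matrix. This algorithm runs in $O(n^2)$ operations over $\F_q$ per codeword; this is the classical bound and the one implicitly used in~\cite{sk13}. Applying it to each of the $k$ rows of $R$ costs $O(kn^2)$ in total; one can even amortize the computation of the common erasure locator polynomial across rows, but the naive count already gives the claimed bound. This recovers $U'$ and hence $\cU$.

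Finally, to output the codeword in its canonical matrix representation I would compute $U'G\in\F_q^{k\times n}$ at a cost of $O(kn n')=O(kn^2)$ and then bring it into RREF by Gaussian elimination at a cost of $O(k^2 n)$; both fit within the $O(kn^2)$ budget. Aggregating everything gives complexity $O(kn^2)$, which equals $O(k^3m^2)$ via $n=km$. The only delicate point is the $O(n^2)$ per-row cost of GRS erasure decoding: this is a standard fact but must be invoked, since a naive Gaussian-elimination approach to inverting an $n'\times n'$ information-set submatrix of $G$ would cost $O(n'^3)=O(n^3)=O(k^3m^3)$ and break the final bound.
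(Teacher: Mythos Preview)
Your proposal is correct and takes essentially the same approach as the paper: decode each of the $k$ received rows as a GRS codeword with at most $n-n'$ erasures using Forney's algorithm in $O(n^2)$, then bring the resulting $k\times n$ matrix into RREF in $O(k^2n)$. The paper's proof is terser and omits your explicit reduction (writing out $R=\gamma(AUG+E)$ and $U'=AU$) as well as your caveat about avoiding a naive $O(n'^3)$ linear-algebra approach; your intermediate step of recovering $U'$ and then re-encoding to $U'G$ is unnecessary if the erasure decoder outputs the full codeword directly, but it does not affect the complexity bound.
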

\begin{proof}
Each of the $k$ received vectors is a codeword of the respective $[n,n']$-Reed-Solomon code. Using Forney's algorithm for erasure decoding \cite{fo65} each vector can be decoded with $O(n^2)$ operations over $\F_q$.
Moreover, we need to bring the decoded vectors, written as rows in a matrix, into reduced echelon form. With Gaussian elimination this can be done with $O(k^2 n)$ operations, which is negligible in the overall complexity order, since $k<n$.
\footnote{The complexity order of Forney's algorithm can be improved to $O(n\log^2 n \log \log n)$ by using simultaneous polynomial evaluation (see \cite[p.\ 216]{ro06b}). Then the overall complexity order of decoding hybrid codes becomes $O(kn\log^2 n \log \log n + k^2 n)$.}
\end{proof}

We see that, from a decoding complexity point of view, in both channel types spread codes  are advantageous compared to hybrid codes.


\section{Simultaneous Deletions and Column Erasures}\label{sec:simul}

In this section we also allow some deletions (i.e., dimension losses) to happen in the CEC.
This makes sense, since even if we handle symbol erasures according to the CEC  model,
the receiver might observe deletions due to rank deficiencies in the random coefficients
chosen in the network nodes during transmission.

\subsection{Symbol Erasure Correction Capabilities}

As in Subsection \ref{S-SpreadSMN} the main tool for erasure correction is Lemma~\ref{L-RP}.
But we cannot directly apply Theorem~\ref{lem:k-1} because a row erasure affects all blocks.
However, as we show next it is enough that one block has no column erasures to retrieve the original codeword.

\begin{thm}\label{T-DelEr}
Let $n=mk$ and $\C=\mathcal S_q(m,k,P)\subseteq \G$ be a Desarguesian spread code.
Let $r\in\{0,\ldots,k-1\}$. Then
\begin{enumerate}
\item $\C$ can uniquely decode any~$r$ row deletions and $k-r-1$ column erasures.
\item $\C$ can uniquely decode any~$r$ row deletions and any $k-r-1$ column erasures per block if one nonzero
        block is not affected by column erasures.
\end{enumerate}
\end{thm}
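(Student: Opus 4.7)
The plan is to deduce both statements from Lemma~\ref{L-RP} applied to a suitable ambient linear rank-metric code. The key observation is that
\[
C := \bigl\{(V_1 \mid \dots \mid V_m) : V_i \in \F_q[P] \text{ for all } i\bigr\} \subseteq \F_q^{k \times n}
\]
is an $\F_q$-linear rank-metric code that contains every codeword matrix in~$\cM$ and has minimum rank distance~$k$: any nonzero element $(D_1 \mid \dots \mid D_m)\in C$ must have some $D_{i_0}\neq 0$, which is then invertible since $\F_q[P]$ is a field, so $D_{i_0}$ alone already contributes $k$ linearly independent rows to the concatenated matrix.

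For statement~1, I would first extract $k-r$ linearly independent rows of the received matrix, obtaining a $(k-r)\times n$ matrix $W$ that coincides with $BU$ on the $n-c$ unerased columns, where $U\in\cM$ is the transmitted codeword, $B\in\F_q^{(k-r)\times k}$ has full row rank~$k-r$ (inherited from the rank-deficient channel matrix), and $c\le k-r-1$. Next, I would complete $B$ to an invertible matrix $\tilde B\in\GL_k(q)$. Since left multiplication by~$\tilde B$ preserves the rank of any matrix, $\tilde B\,C$ is again a linear rank-metric code of minimum rank distance~$k$, and the decoding problem becomes: recover $\tilde B\,U\in\tilde B\,C$ from its first $k-r$ rows with $c$ column erasures, that is, from $r$ row erasures together with $c$ column erasures. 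Since $r+c\le r+(k-r-1)=k-1<k$, Lemma~\ref{L-RP} applies and determines $\tilde B\,U$ uniquely; left multiplying by $\tilde B^{-1}$ then returns~$U$, and hence the subspace $\rs(U)$.

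For statement~2, the plan is to reduce to statement~1 applied blockwise. Let $R_*$ denote the unaffected nonzero block. For each $i\neq *$ I would interpret the pair $(R_*,R_i)$ as the received data for a codeword of the $2$-block Desarguesian spread $\mathcal S_q(2,k,P)$, subject to $r$ row deletions and at most $k-r-1$ column erasures (all concentrated in~$R_i$). Statement~1 applied with $m=2$ uniquely recovers $\rs(U_*\mid U_i)\in\mathcal S_q(2,k,P)$; reading off its reduced row echelon form $\rs(I_k\mid U_*^{-1}U_i)$ then yields $M_i:=U_*^{-1}U_i\in\F_q[P]$. Setting $M_*=I_k$ and assembling all blocks produces the matrix $(M_1\mid\dots\mid M_m)$, whose row space equals $\rs(U)$.

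The main obstacle is the translation in statement~1 from ``$r$ row deletions,'' which manifest as observing $k-r$ arbitrary linear combinations of the rows of $U$ rather than $k-r$ specific rows of $U$, into the row-erasure hypothesis of Lemma~\ref{L-RP}. Completing~$B$ to the invertible matrix~$\tilde B$ is precisely what bridges this gap: inside the transformed code~$\tilde B\,C$ the unobserved rows become honest row erasures in the sense of the lemma, after which the rank-distance calculation is routine.
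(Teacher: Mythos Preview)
Your reduction of statement~2 to statement~1 by pairing each block with the unaffected nonzero block is correct and is exactly what the paper does. The gap is in statement~1: you treat $B$ as known. In this channel the matrix~$A$---and hence $B$---is \emph{not} available to the receiver; unique decodability means that whenever $BU$ and $B'V$ agree on the unerased columns for \emph{some} rank-$(k-r)$ matrices $B,B'$ and some $U,V\in\cM$, one must have $\rs(U)=\rs(V)$. Your application of Lemma~\ref{L-RP} to $\tilde B\,C$ only shows that $U$ is the unique element of~$C$ whose image under~$B$ matches~$W$ on the unerased columns; it says nothing about a competitor $V$ coming with a \emph{different} $B'$, because the first $k-r$ rows of $\tilde B V$ are $BV$, not $B'V$. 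Your closing paragraph acknowledges that row deletions are ``arbitrary linear combinations'' rather than specific rows, but completing $B$ to $\tilde B$ does not bridge that gap unless $B$ is fixed. It is also telling that your argument never invokes the spread property of~$\C$; the conclusion genuinely depends on it.

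The paper's argument confronts the two unknown channel matrices head-on. With $\hat A,\hat A'\in\F_q^{(k-r)\times k}$ of full row rank and $\hat A M=\hat A'M'$ on the $n-c$ unerased columns, one observes that $\rs(\hat A\mid\hat A')\subseteq\F_q^{2k}$ has dimension $k-r$, while the left kernel of $\bigl(\begin{smallmatrix}\bar U\\-\bar V\end{smallmatrix}\bigr)$ (built from the $c\le k-r-1$ erased columns of $U$ and $V$) has dimension at least $2k-c\ge k+r+1$. These two subspaces must therefore intersect nontrivially, and any nonzero $(\mathbf a,\mathbf a')$ in the intersection gives $\mathbf a U=\mathbf a'V$, a nonzero vector in $\rs(U)\cap\rs(V)$. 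The spread property then forces $\rs(U)=\rs(V)$.
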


\begin{proof}
For both statements we have to consider the following situation.
Let $\U=\rs(U)\in\mathcal S_q(m,k,P)$ be sent where $U=(U_1 \mid \dots \mid U_m)$ and $U_i\in \F_q[P]$.
Suppose we receive
$$R=\gamma(AU + E),$$
where~$\gamma$ is the column erasure operator as in~\eqref{e-SMN},
$A\in\F_q^{k\times k}$ of rank $k-r$ represents the network operations responsible for the $r$ row deletions, and $E\in \{0, ?\}^{k\times n}$ is the symbol erasure matrix for the erasures specified in the theorem.

We have to show that we can uniquely recover~$\U$.
This translates into the following problem.
Let $A, A'\in\F_q^{k\times k}$ of rank $k-r$ and $(U_1\mid\ldots\mid U_m),\,(V_1\mid\ldots\mid V_m)\in\cM$  (see Definition~\ref{D-DesSpread}) such that
\begin{equation}\label{e-XAYB}
  \gamma(A(U_1\mid\ldots\mid U_m)+E)=\gamma(A'(V_1\mid\ldots\mid V_m)+E).
\end{equation}
We have to show that $\rs(U_1\mid\ldots\mid U_m)=\rs(V_1\mid\ldots\mid V_m)$.
Note that because of the generality of $A,\,A'$ we may assume that the lower $r$ rows are equal to zero in the matrices on the left and right hand side of~\eqref{e-XAYB}.
We consider the two cases of the theorem separately.
\begin{enumerate}
\item
After rearranging the columns we may write~\eqref{e-XAYB} as
$  A(M\mid \bar U)  =   A'(M'\mid \bar V)$,
where $\bar U, \bar V \in\{?\}^{k\times(k-r-1)}$ are the to-be-erased columns.
Considering only the nonzero rows we obtain
\[
   \hat{A}M=\hat{A'}M',
\]
where $\hat{A},\,\hat{A'}$ are $(k-r)\times k$-submatrices of~$A,A'$, respectively.
Note that $\rk(\hat{A}\mid\hat{A'})=k-r$ since each matrix has rank~$k-r$.
Consider now the (left) kernel of the matrix $S:=\big(\begin{smallmatrix}\bar U\\-\bar V\end{smallmatrix}\big) \in \F_q^{2k\times (k-r-1)}$.
Its dimension is
$$\dim\ker S\geq 2k-(k-r-1)=k+r+1.$$
As a consequence, $\rs(\hat{A}\mid\hat{A'})$ and $\ker S$ intersect nontrivially.
Thus we may choose a nonzero $(\mathbf a,\mathbf a')$ in $\rs(\hat{A}\mid\hat{A'})\cap\ker S$.
Then $\mathbf a(M\mid \bar U)=\mathbf a'(M'\mid \bar V)$, and this is a nonzero vector in the intersection of $\rs(M\mid \bar U)$ and $\rs(M'\mid \bar V)$.
Since~$\mathcal S_q(m,k,P)$ is a spread, this shows that these row spaces are equal and hence
$\rs(U_1\mid\ldots\mid U_m)=\rs(V_1\mid\ldots\mid V_m)$, as desired.
\item
Without loss of generality let the first block be nonzero and unaffected by column erasures.
Considering only the first block and one more block, say the $i$th one, we arrive at the situation of Case~$1$.\ for the spread code $\mathcal S_q(2,k,P)$.
Thus we conclude that $\rs(U_1\mid U_i)=\rs(V_1\mid V_i)$ for $i=2,\dots, m$.
Now invertibility of~$U_1,\,V_1$ implies $V_1^{-1}(V_1\mid V_i)=U_1^{-1}(U_1\mid U_i)$, and thus
$(V_1\mid\ldots\mid V_m)=V_1U_1^{-1}(U_1\mid\ldots\mid U_m)$, which shows that $\rs(U_1\mid\ldots\mid U_m) = \rs(V_1\mid\ldots\mid V_m)$.
\qedhere
\end{enumerate}
\end{proof}

As in Section \ref{sec:spreads} the proof only establishes uniqueness of a codeword that matches the received word after the specified erasures and deletions. An explicit decoding algorithm will be given in Algorithm \ref{alg4}.

In the following corollary we count the number of correctable erasure patterns when using the CEC and $r$ deletions have occurred.
For simplicity we only count erasure patterns of size $(k-r)\times n$ instead of $k\times n$, since we can assume that the first $k-r$ rows of the received matrix~$R$
are a basis of the received space. In this case any symbol erasures in the last $r$ rows can also be tolerated, no matter what type of codes we use for transmission.

\begin{cor}\label{C-SMNcount-r}
Let $\mathcal S_q(m,k,P)\subseteq \G$ be  a Desarguesian spread code.
Suppose the matrix representation
\[
  U=(0\mid\ldots\mid0\mid I\mid B_{i+1}\mid \ldots\mid B_m)\in\cM
\]
of a codeword $\U \in \cS_q(m,k,P)$
is transmitted over the CEC and $r<k$ deletions (or row erasures) occurred.
Let~$\ell$ be the number of nonzero blocks~$B_t$ and $N_r:= \sum_{j=0}^{k-r-1} \binom{k}{j} (2^{k-r}-1)^j$. 
Then
$$e_\ell:=N_r^m \left(1-(\frac{N_r-1}{N_r})^{\ell+1}\right)$$
symbol erasure patterns $E\in \{0,?\}^{(k-r)\times n}$ can be uniquely decoded.
As a consequence, the average number of correctable symbol erasure patterns $E\in\{0,?\}^{(k-r)\times n}$ for $\cS_q(m,k,P)$ is (at least)
$$e_{\text{avg}}^{(r)} := \frac{N_r^{m}}{q^n-1}\left(q^n-\Big[\frac{(q^k-1)(N_r-1)}{N_r}+1\Big]^m\right) .$$
\end{cor}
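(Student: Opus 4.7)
The plan is to mirror the proof of Theorem~\ref{T-SMNcount} essentially line by line, but with Theorem~\ref{T-DelEr}(2) replacing Theorem~\ref{lem:k-1} as the underlying decodability criterion. Since the $r$ deletions leave only $k-r$ independent received rows, Theorem~\ref{T-DelEr}(2) guarantees unique recovery of the codeword as long as (i) every block of the erasure pattern has at most $k-r-1$ erased columns, and (ii) at least one of the nonzero blocks is received without any column erasures. The main conceptual point to verify is that this combined row/column condition is precisely what Theorem~\ref{T-DelEr}(2) delivers in the worst case of Lemma~\ref{L-WCSell}; the rest of the argument is purely combinatorial.

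Given~(i) and (ii), the per-block count is immediate. A column of $E_i\in\{0,?\}^{(k-r)\times k}$ is erased iff it contains at least one~$?$, so there are $2^{k-r}-1$ nonzero columns, and the number of admissible block patterns with at most $k-r-1$ erased columns is
\[
   N_r=\sum_{j=0}^{k-r-1}\binom{k}{j}(2^{k-r}-1)^j,
\]
including the all-zero pattern. Specializing to $r=0$ recovers $N=2^{k^2}-(2^k-1)^k$ of Theorem~\ref{T-SMNcount}, which is a useful sanity check. Now let~$U$ have $\ell+1$ nonzero blocks (the identity block and the $\ell$ nonzero~$B_j$), and let~$t$ denote the number of those nonzero blocks received with no column erasures; condition~(ii) requires $t\geq 1$. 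Since the remaining $m-\ell-1$ zero blocks each admit~$N_r$ patterns, the same telescoping as in~\eqref{e-el} yields
\[
  e_\ell=\sum_{t=1}^{\ell+1}\binom{\ell+1}{t}(N_r-1)^{\ell+1-t}N_r^{m-\ell-1}=N_r^m\Big(1-\big(\textstyle\frac{N_r-1}{N_r}\big)^{\ell+1}\Big).
\]

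For the average over all codewords, I would reuse the counting from the proof of Theorem~\ref{T-SMNcount}: the number of matrices in~$\cM$ with identity block at position~$i$ and exactly~$\ell$ nonzero $B_j$'s is $\binom{m-i}{\ell}(q^k-1)^\ell$, so exactly as in~\eqref{e-eavg},
\[
 e_{\text{avg}}^{(r)}=\frac{1}{q^n-1}\sum_{\ell=0}^{m-1}\binom{m}{\ell+1}(q^k-1)^{\ell+1}e_\ell.
\]
Substituting the expression for~$e_\ell$, splitting the two summands, reindexing $\ell+1\to\ell$, and applying the binomial theorem $\sum_{\ell=0}^m\binom{m}{\ell}x^\ell=(1+x)^m$ together with $n=mk$ yields the closed form stated in the corollary. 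The algebraic simplification is identical to the one at the end of the proof of Theorem~\ref{T-SMNcount}; the only substitution is $N\rightsquigarrow N_r$. Consequently the only genuinely new step is the initial appeal to Theorem~\ref{T-DelEr}(2) in place of Theorem~\ref{lem:k-1}, which is the point that deserves the most care in the write-up.
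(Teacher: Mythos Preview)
Your proposal is correct and follows exactly the route the paper takes: the paper's proof consists of the per-block count yielding~$N_r$ and then the sentence ``the rest of the proof is analogous to the one of Theorem~\ref{T-SMNcount}, using Theorem~\ref{T-DelEr}.2 instead of Theorem~\ref{lem:k-1}'', which is precisely what you have spelled out in detail. Your sanity check $N_0=N$ and the explicit re-derivation of the binomial simplification are welcome additions but not new ideas.
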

\begin{proof}
In each block of size $(k-r)\times k$ that may be affected by erasures, we have $\sum_{j=0}^{k-r-1} \binom{k}{j} (2^{k-r}-1)^j$ possible erasure patterns that affect up to $k-r-1$ columns.
The rest of the proof is analogous to the one of Theorem \ref{T-SMNcount}, using Theorem \ref{T-DelEr}.2 instead of
Theorem~\ref{lem:k-1} for counting the correctable erasure patterns.
\end{proof}

We now compare the performance of spread codes to the one of hybrid codes in this setting.

\begin{lem}\label{lem-hyb2}
Let $\cH_r\subseteq \G$ be  a $[n,k,n']$-hybrid code constructed from a subspace code $\C_r \subseteq \mathcal G_q(k,n')$ of minimum subspace distance $2(r+1)$.
After $r<k$ deletions (or row erasures) the number of correctable symbol erasure patterns $E\in\{0,?\}^{(k-r)\times n}$ 
 is
$$e_{\mathcal H_r} := \sum_{j=0}^{n-n'} \binom{n}{j} (2^{k-r}-1)^j .$$
\end{lem}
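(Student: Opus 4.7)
The plan is to invoke Lemma~\ref{L-HybEras} and then reduce the counting of correctable erasure patterns to a combinatorial count of columns affected by erasures.

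First, I would observe that since $\C_r\subseteq\cG_q(k,n')$ has minimum subspace distance $d_S(\C_r)=2(r+1)$, Lemma~\ref{L-HybEras} (with $D=r+1$) guarantees that the associated hybrid code $\cH_r$ can correct any combination of up to $r$ dimension errors and up to $n-n'$ column erasures. The $r$ deletions (row erasures) constitute exactly $r$ dimension errors in the received subspace, leaving the full column-erasure correction capability of $n-n'$ available for the remaining matrix, which has exactly $k-r$ rows.

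Next, I would translate the action of the channel operator $\gamma$ in terms of an erasure matrix $E\in\{0,?\}^{(k-r)\times n}$: a column of the received matrix becomes fully erased if and only if the corresponding column of $E$ is nonzero (i.e.\ contains at least one $?$), and remains untouched otherwise. Thus an erasure pattern~$E$ is correctable whenever the number of nonzero columns of~$E$ is at most $n-n'$.

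Finally, the counting is straightforward. For each $j\in\{0,1,\dots,n-n'\}$ we choose which $j$ columns of $E$ are to be affected in $\binom{n}{j}$ ways, and each affected column can independently be any of the $2^{k-r}-1$ nonzero vectors in $\{0,?\}^{k-r}$, while the remaining $n-j$ columns are forced to be zero. Summing yields
\[
   e_{\cH_r}=\sum_{j=0}^{n-n'}\binom{n}{j}(2^{k-r}-1)^j,
\]
as claimed. The only conceptual point requiring care is confirming that the $r$ row deletions do not eat into the column-erasure correction budget $n-n'$; this is ensured by the construction of~$\cH_r$ via a subspace code of distance $2(r+1)$ combined with the GRS code, as per Lemma~\ref{L-HybEras}.
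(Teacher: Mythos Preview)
Your proposal is correct and follows exactly the same approach as the paper: invoke Lemma~\ref{L-HybEras} to conclude that $\cH_r$ corrects $r$ deletions together with any $n-n'$ column erasures, and then count the erasure matrices $E\in\{0,?\}^{(k-r)\times n}$ with at most $n-n'$ nonzero columns. The paper's own proof is a one-liner that omits the explicit counting (which already appeared, with $k$ in place of $k-r$, in Lemma~\ref{L-HybSymbEras}); your version simply spells out that combinatorial step.
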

\begin{proof}
By Lemma \ref{L-HybEras} the hybrid code $\cH_r$ is able to decode $r$ deletions and any $n-n'$ column erasures, which implies the statement. 
\end{proof}

If we take a lifted Gabidulin code (see \eqref{eq:lifting}) of minimum subspace distance $2(r+1)$ as the subspace code $\C_r \subseteq \mathcal G_q(k,n')$, the rate of the  $[n,k,n']$-hybrid code~$\cH_r$ is
\begin{equation*}
\frac{\log_q(|\cH_r |)}{nk} =  \frac{\log_q(|\C_r|)}{nk} = \left\{\begin{array}{ll} \frac{(n'-k)(k-r)}{nk},&\text{if } n'-k\geq k \\[.5ex]
                                                                   \frac{k(n'-k-r)}{nk} =\frac{n'-k-r}{n},&\text{if }n'-k<k\end{array}\right.    ,
\end{equation*}
under the assumption that $r < \min\{k,n'-k\}$ (otherwise we get a trivial code of dimension $0$).
Let us compare this to a spread code in $ \G$ of approximately the same rate.
By~\eqref{e-Rspread} the latter has approximate rate $(n-k)/nk$; thus we need
\begin{equation*}
    n' \approx\left\{\begin{array}{ll} \frac{n-k}{k-r} +k,&\text{if }n\geq k(k-r+1)\\
    [.5ex]  \frac{n-k}{k} +k +r ,&\text{if }  n< k(k-r+1)\end{array}\right.
\end{equation*}
to achieve approximately the same rate in both codes.

We conclude this subsection with an example comparing the performance of hybrid and spread codes in this setting.

\begin{ex}
We fix $k=3, r=1$ with variable $n\geq 9$. To achieve approximately the same rate in the hybrid code we need $n'\approx (n+3)/2$.
Moreover, for the spread codes we fix $q=2$, whereas for the hybrid codes we pick $q$ as the smallest prime power exceeding $n-1$.
We obtain the following data:
$$
\begin{array}{|c|c||c|c||c|c|}
\hline
   n & n'\phantom{\Big|}&
   \textnormal{rate spread} & \textnormal{rate hybrid} & e_{\text{avg}}^{(r)} \text{ (see Cor.~\ref{C-SMNcount-r})}& e_{\mathcal H_r}\text{ (see Lem.~\ref{lem-hyb2})}\\\hline
9 &6\phantom{\Big|}& 0.229 & 0.222 & 241 & 2620\\\hline
12 &7\phantom{\Big|}& 0.255 & 0.222 & 3068 & 239122\\
 &8\phantom{\Big|}&  & 0.278 &  & 46666 \\\hline
15 &9\phantom{\Big|}& 0.271 & 0.267 & 36736 & 4502215\\\hline
18 &10\phantom{\Big|}& 0.281 & 0.259 & 422707 &372581830\\
 &11\phantom{\Big|}&  & 0.296 &  &85485592\\\hline
  \end{array}
$$
One can see that at comparable rates the hybrid codes can correct more erasure patterns than the respective spread codes.\end{ex}


\subsection{Decoding} 

Next we describe a decoding algorithm for spread codes in the CEC, assuming that also~$r$ row deletions have occurred.
The algorithm has two steps: in the first step we decode the column erasures in a specific Gabidulin code, and in the second step we decode
the row erasures in the spread code.

For the decoding algorithm described next we assume that the spread is of the form $\mathcal S_q(m,k,P^\top)$  
and that the received matrix is decodable in the sense that (after the row deletions) one nonzero block has no column erasures
and that all blocks have at most $k-r-1$ columns erased.
In the \textbf{for}-loop of the algorithm we consider the row vector $G'\in\F_{q^k}^{k-r}$ as generator matrix of a Gabidulin code of $\F_{q^k}$-dimension~$1$ and rank distance $k-r$;
see Proposition~\ref{P-FPGab}.
We use a known column erasure decoder for Gabidulin codes, e.g.\ from \cite{ri04p}. As always, $n=mk$.

\begin{algorithm}
\begin{algorithmic}
\REQUIRE{a received matrix $R=(R_1\mid \dots \mid R_m) \in (\F_q\cup \{?\})^{k\times n}$ of rank $k-r$ with a nonzero block $R_*$ without column erasures}
\STATE{row reduce $R$ and denote the $k-r$ non-zero rows by $\bar R=(\bar R_1\mid \dots \mid \bar R_m) $}
\STATE{compute $G' = 
\bar\psi(\bar R_*) $}   
\FOR{$i=1,\dots,m$}
\STATE{decode $\bar R_i$ in the Gabidulin code 
with generator matrix $G'$; call the output $\bar B_i$}
\ENDFOR
\STATE{use Algorithm \ref{alg2} to decode the resulting matrix $(\bar B_1\mid \dots \mid \bar B_m)$ in the spread $\mathcal S_q(m,k,P^\top)$}
\end{algorithmic}
\caption{Decoding of Desarguesian spread codes $\mathcal S_q(m,k,P^\top)$ in the CEC.}
\label{alg4}
\end{algorithm}

The correctness of the algorithm follows from Proposition~\ref{P-FPGab} and the correctness of Algorithm~\ref{alg2}.

\begin{prop}
Denote by $f_{Gab}(k,k-r)$ the computational complexity order of the Gabidulin decoder used in the \textbf{for}-loop of Algorithm \ref{alg4}.
Then the computational complexity order of Algorithm \ref{alg4} is in $O(k^2n + m f_{Gab}(k,k-r))$.

Using one of the Gabidulin decoders of \cite{ri04p,si09p} one can achieve an overall complexity order in $O( k^4 m)=O(k^3 n)$.
\end{prop}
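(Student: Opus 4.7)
The plan is to bound the cost of each of the four stages of Algorithm~\ref{alg4} separately in operations over~$\F_q$ and then add them up.

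First I would note that row-reducing a $k\times n$-matrix via Gaussian elimination costs $O(k^2n)$ operations over~$\F_q$. Second, the computation of $G'=\bar\psi(\bar R_*)$ merely reinterprets the columns of~$\bar R_*$ as elements of~$\F_{q^k}$ and so carries negligible arithmetic cost. Third, the \textbf{for}-loop invokes the Gabidulin decoder $m$ times, contributing $m\cdot f_{Gab}(k,k-r)$. Fourth, the concluding call to Algorithm~\ref{alg2} costs $O(k^2m)=O(kn)$ by Theorem~\ref{T-Alg2Compl}, and this is absorbed into the $O(k^2n)$ from the row reduction. Summing, the overall complexity is $O(k^2n+m\,f_{Gab}(k,k-r))$, as claimed.

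For the second statement I would plug in the bound for the decoders of~\cite{ri04p,si09p}. Each call handles a length-$(k-r)$ Gabidulin code over~$\F_{q^k}$ in $O((k-r)^2)$ operations in~$\F_{q^k}$; since one $\F_{q^k}$-operation costs $O(k^2)$ operations over~$\F_q$ under standard polynomial arithmetic, we get $f_{Gab}(k,k-r)\in O(k^4)$ in the worst case. Substituting and using $n=km$ yields $O(k^2n+mk^4)=O(k^3m+k^4m)=O(k^4m)=O(k^3n)$.

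The only real bookkeeping obstacle is keeping the parameters straight: the inner Gabidulin code has length~$k-r$ but is defined over the extension field~$\F_{q^k}$, so the complexity bounds quoted from~\cite{ri04p,si09p} (typically expressed in $\F_{q^k}$-operations) must be converted to operations over~$\F_q$ before being combined with the costs of Gaussian elimination and Algorithm~\ref{alg2}. Once this conversion is carried out, both claimed bounds follow directly from the standard cost of Gaussian elimination, the cited Gabidulin decoder complexity, and Theorem~\ref{T-Alg2Compl}.
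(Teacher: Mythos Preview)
Your proof is correct and follows essentially the same approach as the paper: bound the four stages (Gaussian elimination $O(k^2n)$, computing $G'$ negligible, $m$ calls to the Gabidulin decoder, and Algorithm~\ref{alg2} in $O(kn)$) and add them up. In fact, you go beyond the paper by actually justifying the second claim---the paper's proof stops after the first bound and leaves the substitution of the cited Gabidulin-decoder complexity implicit, whereas you spell out the conversion from $\F_{q^k}$-operations to $\F_q$-operations.
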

\begin{proof}
Using Gaussian elimination, the row reduction of $R$ can be done with $O(k^2n)$ operations over $\F_q$.
The computation of $G'$ is simply done by rewriting the at most $(k-r)k$ coefficients in~$\F_q$. Furthermore, we know from Theorem~\ref{T-Alg2Compl} that the last part, i.e.\ Algorithm \ref{alg2}, needs $O(kn)$ operations.
Since the \textbf{for}-loop is executed at most $m$ times, the overall complexity order $O(k^2n + m f_{Gab}(k,k-r))$ follows.
\end{proof}

Note that Algorithm \ref{alg4} can also be used to decode spread codes in the column erasure channel without any row deletions.
However, since the complexity order is worse than for Algorithm \ref{alg1} it is preferable to use Algorithm \ref{alg1}
(and the respective spread codes in the non-transposed form) if we assume that no row deletions happen during transmission.

\begin{ex}
Consider the spread code $\mathcal S_2(2,4,P^\top)$, where
\[
     P=\begin{pmatrix}0&0&0&1\\1&0&0&1\\0&1&0&0\\0&0&1&0\end{pmatrix},
\]
i.e., $\alpha^4=\alpha +1$.
We receive the matrix
\[
    R=\left(\!\!\begin{array}{cccc|cccc}1&0&0&1&1&1&1&?\\1&0&0&0&1&0&1&?\\1&0&0&1&1&1&1&?\\0&0&0&1&0&1&0&?\end{array}\!\!\right),
\]
from which we compute the row reduced basis matrix
$$        \bar R=\left(\!\!\begin{array}{cccc|cccc}1&0&0&0&1&0&1&?\\0&0&0&1&0&1&0&?\end{array}\!\!\right),
$$
i.e., $r=2$ deletions have occurred. The first block has no column erasures, thus
$$ R_* = \left(\!\!\begin{array}{cccc}1&0&0&0\\0&0&0&1\end{array}\!\!\right)  $$
and hence
$$  G' =  \begin{pmatrix} 1 & \alpha^3  \end{pmatrix}    .$$
We now decode the first and second block in the Gabidulin code with generator matrix $G'$ and get the codewords
$(1,\alpha^3)$ and $(1+\alpha^2+\alpha^3, \alpha) = \alpha^{13}G'$, respectively. This corresponds to the matrix
$$ (B_1\mid B_2)= \left(\!\!\begin{array}{cccc|cccc}1&0&0&0&1&0&1&1\\0&0&0&1&0&1&0&0 \end{array}\!\!\right)  ,  $$
which we then decode with the help of Algorithm \ref{alg2} to the spread codeword represented by $(1, 1+\alpha^2+\alpha^3) \in \mathcal G_{2^4}(1,2)$.
The corresponding matrix representation in $\cS_2(2,4,P^\top)$ in RREF is
\[
    \left(\!\!\begin{array}{cccc|cccc}1&0&0&0&1&0&1&1\\0&1&0&0&1&0&0&1\\0&0&1&0&1&0&0&0\\0&0&0&1&0&1&0&0  \end{array}\!\!\right)  .
\]

\end{ex}

\medskip

Analogously to Section \ref{sec:decoding}, as a final comparison, we derive the complexity of decoding hybrid codes in the CEC.
We now have to first decode all symbol
erasures in the Reed-Solomon code and then decode the dimension errors in the subspace code.
 We obtain the following result.

\begin{prop}
$[n,k,n']$-hybrid codes in $\G$ in the CEC can be decoded with a computational complexity in $ O(k {n'}^3) \subseteq O(k n^3)$.
\end{prop}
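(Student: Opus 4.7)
The decoder mirrors the one from the preceding proposition (hybrid codes in the CEC without deletions) with one added stage to handle the up to $r$ dimension errors that Lemma~\ref{L-HybEras} lets us correct. Recall from Definition~\ref{D-Hybrid} that a hybrid codeword in~$\G$ is obtained from an outer subspace codeword $\U\in\cG_q(k,n')$ by applying the inner $[n,n']_q$ Reed--Solomon generator $G$ to every basis vector of~$\U$. Consequently, after the CEC, each of the (at most) $k$ rows of the received matrix is a column-erasure-corrupted RS codeword, and the decoder splits naturally into an RS-decoding stage followed by a subspace-decoding stage.

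Stage one: exactly as in the no-deletion case, I would apply Forney's erasure decoder independently to each row of the received matrix. Each row costs $O(n^2)$ operations over~$\F_q$, so with at most $k$ rows the total is $O(kn^2)$. Since $n-n'<k\le n'$ in the nontrivial regime, we have $n=O(n')$, and thus this stage contributes $O(k{n'}^2)$ operations, which is absorbed into the target bound.

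Stage two: at this point I have a matrix that represents (up to dimension deficiency) a codeword of the outer subspace code $\C_r\subseteq\cG_q(k,n')$, and I apply a standard subspace decoder for $\C_r$ to correct up to $r$ dimension errors. Taking $\C_r$ to be a lifted Gabidulin code (the natural choice attaining minimum subspace distance $2(r+1)$) and invoking a Gabidulin-based decoder, the work can be bounded by $O(k^2{n'}^2)$ operations over~$\F_q$; since $k\le n'$ in the nontrivial regime, this is in $O(k{n'}^3)$. A final RREF normalization adds only $O(k^2 n')$ operations and is negligible.

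Summing the two stages gives the claimed $O(k{n'}^3)$, and $O(k{n'}^3)\subseteq O(kn^3)$ is immediate from $n'\le n$. The main subtlety in writing this out carefully is pinning down the exact complexity of the Gabidulin subspace decoder over $\F_q$ (including the translation of $\F_{q^k}$-arithmetic down to subfield operations); the stated bound is deliberately loose, so as to cover the various implementation choices and to preserve the same polynomial format as the other complexity statements in the section.
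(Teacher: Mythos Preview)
Your overall two-stage structure (inner Reed--Solomon erasure decoding, then outer subspace decoding) matches the decoder in~\cite{sk13}, but your complexity analysis of Stage one contains a genuine error. You assert that ``$n-n'<k\le n'$ in the nontrivial regime'' and conclude $n=O(n')$. No such constraint holds: the only parameter restrictions for an $[n,k,n']$-hybrid code are $k<n'<n$, and the number $n-n'$ of correctable column erasures may vastly exceed~$k$ (cf.\ Example~\ref{ex19}, where $k=2$ but $n-n'=n/2-1$). Consequently your bound $O(kn^2)$ for the Forney step cannot in general be absorbed into $O(k{n'}^2)$ or even $O(k{n'}^3)$, and the argument as written does not establish the stated $O(k{n'}^3)$.

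The fix is easy and worth noting: since the column erasures are in the same positions for all $k$ rows, one may select $n'$ unerased coordinates, invert the corresponding $n'\times n'$ submatrix of~$G$ once in $O({n'}^3)$, and then recover each row's information vector with a single matrix--vector product in $O({n'}^2)$. Stage one then costs $O({n'}^3+k{n'}^2)\subseteq O(k{n'}^3)$, independently of~$n$, and the rest of your argument goes through. By contrast, the paper's proof avoids this analysis entirely and simply cites \cite[Section~VI]{sk13}, where the decoder is already shown to run in $O(r{n'}^3)$ with $r<k$; the bound $O(k{n'}^3)$ then follows immediately.
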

\begin{proof}
In \cite[Section VI]{sk13} a decoding algorithm for the hybrid codes described above is given. In there the respective complexity order is derived as $O(r n'^3)$, where $r <k $ is the number of dimension errors the hybrid code is able to correct.
\end{proof}

Therefore,  if $n' \in O(n)$ and we allow deletions in the column erasure channel,  spread codes are again preferable from a decoding complexity point of view.


\section{Conclusions}\label{sec:conclusion}
We compared the symbol erasure correction capability of spread codes in the row erasure channel and the column erasure channel, compared those to the erasure correction capability of hybrid codes in the column erasure channel, and also investigated the according decoding complexities.
The results show that, depending on the application and the given parameters, any of the three combinations 
 might be preferable.
\begin{itemize}
\item Generally, spread codes bear the advantage that they can be constructed over any finite field, whereas for hybrid codes one needs a field size of at least the length of the vectors to be transmitted.
\item Moreover, for both the row and the column erasure channel there exist very efficient decoding algorithms for spread codes. These algorithms have lower complexity order than known decoding algorithms for hybrid codes.
\item On the other hand, when using the column erasure channel (with or without deletions), hybrid codes can correct more symbol erasure patterns than spread codes of comparable rate.
\item Lastly, when using spread codes, it depends on the parameters whether the row or the column erasure channel is the preferable model. For small $n$ (length of the vectors) compared to $k$ (dimension of the codewords), 
the row erasure channel performs better. However, for increasing $n$ the symbol erasure correction capability in the column erasure channel is exponentially larger than in the row erasure channel. The decoding complexity order for both models is comparable.
\end{itemize}
Overall, it depends on the importance of decoding speed and the field size in the given application if spread or hybrid codes are the better choice. However, for almost all parameter sets the column erasure channel will bear only advantages over the row erasure channel when considering a symbol erasure network channel. Since this channel model has not been studied very extensively yet (as opposed to the classical operator channel), this motivates future research for coding in the column erasure channel model.

\bibliographystyle{abbrv}
\bibliography{network_coding_stuff}

\end{document}